\newcommand{\PAPER}[1]{#1}
\newcommand{\LIPICS}[1]{}
\title{Tree Drawings Revisited}
\author{Timothy M. Chan\thanks{Department of Computer Science, University of Illinois at Urbana-Champaign (tmc@illinois.edu).}}
\newtheorem{theorem}{Theorem}[section]
\newtheorem{lemma}[theorem]{Lemma}
\newtheorem{corollary}[theorem]{Corollary}
\newtheorem{obs}[theorem]{Observation}
\newtheorem{fact}[theorem]{Fact}
\title{Tree Drawings Revisited}
\author{Timothy M. Chan}{Department of Computer Science, University of Illinois at Urbana-Champaign, USA}{tmc@illinois.edu}{}{}
\authorrunning{T.\,M. Chan}
\subjclass{
\ccsdesc[100]{Theory of computation~Randomness, geometry and discrete structures~Computational geometry}, 
\ccsdesc[100]{Mathematics of computing~Discrete mathematics~Graph theory~Trees}, 
\ccsdesc[100]{Human-centered computing~Visualization~Visualization techniques~Graph drawings}
}
\keywords{graph drawing, trees, recursion}
\theoremstyle{plain}
\newtheorem{obs}[theorem]{Observation}
\newtheorem{fact}[theorem]{Fact}
\newcommand{\fig}[3]{\begin{figure}%
  \begin{center}\includegraphics[scale=#2]{#1.pdf}\end{center}%
  \vspace{-2ex}\caption{#3}\label{#1}\end{figure}}
\newcommand{\up}[1]{\left\lceil #1\right\rceil}
\newcommand{\down}[1]{\left\lfloor #1\right\rfloor}
\newcommand{\eps}{\varepsilon}
\newcommand{\R}{\mathbb{R}}
\newcommand{\IGNORE}[1]{}
\begin{document}
\maketitle

\begin{abstract}
We make progress on a number of open problems concerning the area requirement for drawing trees on a grid.  We prove that
\begin{enumerate}
\item every tree of size $n$ (with arbitrarily large degree) has a straight-line drawing with area $n2^{O(\sqrt{\log\log n\log\log\log n})}$, improving the longstanding $O(n\log n)$ bound;
\item every tree of size $n$ (with arbitrarily large degree) has a straight-line upward drawing with area $n\sqrt{\log n}(\log\log n)^{O(1)}$, improving the longstanding $O(n\log n)$ bound;
\item every binary tree of size $n$ has a straight-line orthogonal drawing with area $n2^{O(\log^*n)}$, improving the previous $O(n\log\log n)$ bound by Shin, Kim, and Chwa (1996) and Chan, Goodrich, Kosaraju, and Tamassia (1996);
\item every binary tree of size $n$ has a straight-line order-preserving drawing with area $n2^{O(\log^*n)}$, improving the previous $O(n\log\log n)$ bound by Garg and Rusu (2003);
\item every binary tree of size $n$ has a straight-line orthogonal order-preserving drawing with area $n2^{O(\sqrt{\log n})}$, improving the $O(n^{3/2})$ previous bound by Frati (2007).
\end{enumerate}
\end{abstract}

\section{Introduction}

Drawing graphs with small area 
has been a subject of intense study in combinatorial
and computational geometry for more than two decades \cite{DiBOOK,DiFraSURV}. 
The goal is to determine worst-case
bounds on the area needed to draw
any $n$-vertex graph in a given class, subject to
certain drawing criteria,
where vertices
are mapped to points on an integer grid $\{1,\ldots,W\}\times\{1,\ldots,H\}$, and the \emph{area}
of the drawing is defined to be the width $W$ times the height $H$.
All drawings in this paper are required to be {\em planar\/}, where
edge crossings are not allowed.
All our results will be about {\em straight-line\/} drawings,
where edges are drawn as straight line segments,
although {\em poly-line\/} drawings that allow bends along the edges
have also received considerable attention.

It is well known~\cite{FrPaPo,Sch} that every planar graph of size $n$
has a straight-line drawing with area $O(n^2)$ (with width and height $O(n)$), and this bound
is asymptotically tight in the worst case.  Much 
research is devoted to understanding which subclasses of planar 
graphs admit subquadratic-area drawings, and obtaining tight area bounds
for such classes.

\PAPER{\subsection{Drawing arbitrary trees}}
\LIPICS{\subparagraph{Drawing arbitrary trees.}}

Among the simplest is the class of all trees.
As hierarchical structures occur naturally in many areas
(from VLSI design to phylogeny),
visualization of trees is of particular interest.
Although there have been numerous papers on tree drawings
(e.g., \cite{Bac,BieJGAA17,BieCCCG17,SODA99,GD96,Cre,CrePen,Fra07,FraSODA17,GarSoCG93,
GarRus03,GarRusICCSA,GarRus04,LeeTHESIS,Lei,ReiTil,ShKiCh,ShiIPL,ShiTHESIS,Tre,Val}),
the most basic question of determining the worst-case area 
needed to draw arbitrary trees, without any additional criteria other
than being planar and straight-line, is surprisingly still open.

An $O(n\log n)$ area upper bound is folklore
and can be obtained by a straightforward recursive algorithm, as described in Figure~\ref{fig_standard}, which we will refer to
as the \emph{standard\/} algorithm
(the earliest reference was perhaps Shiloach's 1976 thesis~\cite[page~94]{ShiTHESIS};
see also Crescenzi, Di Battista, and Piperno~\cite{Cre} for the same algorithm for binary trees).  The algorithm gives linear width 
and logarithmic height.  An analogous algorithm, with $x$
and $y$ coordinates swapped, gives logarithmic width and
linear height. 


However, no single improvement to the
$O(n\log n)$ bound has been found for general trees.
No improvement is known even if drawings are relaxed to
be poly-line!

In an early
SoCG'93 paper by Garg, Goodrich, and Tamassia~\cite{GarSoCG93},
it was shown that
linear area is attainable for poly-line drawings of trees 
with degree bounded by $O(n^{1-\eps})$ for any constant $\eps>0$.
Later, Garg and Rusu~\cite{GarRus04,GarRusICCSA} obtained a similar result
for straight-line drawings for degree up to $O(n^{1/2-\eps})$.\footnote{
It is not clear to this author if their analysis assumed a much stronger property,
that every subtree of size $m$ has degree at most $O(m^{1/2-\eps})$.
}
These approaches do not give good bounds when the maximum degree is linear.

\PAPER{
\fig{fig_ex}{1.15}{Examples of tree drawings: (a)~a straight-line upward drawing with width 5 and height 6, and (b)~a straight-line
orthogonal order-preserving drawing with width 4 and height 5.}
}

\fig{fig_standard}{\PAPER{1}\LIPICS{0.8}}{The ``standard'' algorithm to
produce a straight-line upward drawing of any tree of size $n$,
with width at most $n$ and height at most $\up{\log n}$:
reorder the subtrees so that $T_d$ is the largest, then
recursively draw $T_1,\ldots,T_d$.}

To understand why unbounded degree can pose extra challenges,
consider the extreme case when the tree is a star of size $n$,
and we want to draw it on an $O(\sqrt{n})\times O(\sqrt{n})$ grid.
A solution is not difficult if we use the 
fact that relatively prime pairs are abundant, but
most tree drawing algorithms use geometric divide-and-conquer
strategies that do not seem compatible with such number-theoretic
ideas.

\PAPER{\paragraph{New results.}}
\LIPICS{\subparagraph{New results.}}
Our first main result is the first $o(n\log n)$ area upper bound
for straight-line drawings of arbitrary trees: the bound is $n2^{O(\sqrt{\log\log n\log\log\log n})}$, which in particular
is better than $O(n\log^\eps n)$ for any constant
$\eps>0$.  

Even to those who care less
about refining logarithmic factors, our method has
one notable advantage: it can give drawings achieving
a full range of width--height tradeoffs (in other words,
a full range of aspect ratios).  For example, we can simultaneously obtain
width and height $\sqrt{n}2^{O(\sqrt{\log n\log\log n})}$.
Although the extra factor is now superpolylogarithmic,
the result is still new.
In contrast, the standard algorithm (Figure~\ref{fig_standard}) produces only
narrow drawings, whereas the previous approaches of
Garg et al.~\cite{GarSoCG93,GarRus04} provided width--height
tradeoffs but inherently cannot
give near $\sqrt{n}$ perimeter if degree exceeds $\sqrt{n}$. 

For rooted trees, it is natural to consider
\emph{upward\/} drawings, where the $y$-coordinate of
each node is greater than or equal to the $y$-coordinate
of each child\PAPER{  (see Figure~\ref{fig_ex}(a))}.  The drawing obtained
by the standard
algorithm is upward.
We obtain the first $o(n\log n)$ area bound
for straight-line upward drawings of arbitrary trees as well:
the bound is near $O(n\sqrt{\log n})$, ignoring small $\log\log$ factors.  (See Table~1.)

These results represent significant progress towards
Open Problems 5, 6, 17, and 18 listed in Di Battista and Frati's
recent survey~\cite{DiFraSURV}.

We will describe the near-$O(n\sqrt{\log n})$ upward algorithm first,
in Section~\ref{sec:general}, which prepares us for the more involved
$n2^{O(\sqrt{\log\log n\log\log\log n})}$ non-upward algorithm
in Section~\ref{sec:general2}.


\begin{table}
\begin{tabular}{l|l|l}
  & non-order-preserving & order-preserving\\[.5ex]\hline&&\\[-1.5ex]
\begin{tabular}{l}non-\\upward\end{tabular} & 
\begin{tabular}{ll}$O(n\log n)$
& by standard alg'm\\ 
$O(nc^{\sqrt{\log\log n\log\log\log n}})$ & {\bf new} \end{tabular} 
& 
\begin{tabular}{l}$O(n\log n)$\\ by Garg--Rusu'03~\cite{GarRus03}\end{tabular}
\\[3.5ex]\hline&&\\[-1.5ex]
\begin{tabular}{l}upward\end{tabular} & 
\begin{tabular}{ll}$O(n\log n)$ & by standard alg'm\\ 
$O(n\sqrt{\log n}\log^c\log n)$ & {\bf new}\end{tabular} 
& 
\begin{tabular}{ll}$O(nc^{\sqrt{\log n}})\!\!$ & by Chan'99~\cite{SODA99}$\!\!$\end{tabular}
\\[3.5ex]\hline&&\\[-1.5ex]
\begin{tabular}{l}strictly\\upward\end{tabular} & 
\begin{tabular}{ll} $\Theta(n\log n)$ & by standard alg'm~\cite{Cre}  \end{tabular}
&  
\begin{tabular}{ll}$O(nc^{\sqrt{\log n}})\!\!$ & by Chan'99~\cite{SODA99}$\!\!$\end{tabular}
\end{tabular}
\vspace{1ex}
\caption{Worst-case area bounds for straight-line drawings
of \emph{arbitrary trees}.
(In all tables, $c$ denotes some constant, and $\Theta$
denotes
tight results that have matching lower bounds.)}
\end{table}

\PAPER{\subsection{Drawing binary trees}}
\LIPICS{\subparagraph{Drawing binary trees.}}

\begin{table}
\begin{tabular}{l|l|l}
  & non-order-preserving & order-preserving\\[.5ex]\hline&&\\[-1ex]
\begin{tabular}{l}non-\\upward\end{tabular} & 
\begin{tabular}{l}$\Theta(n)$\\ by Garg--Rusu'04~\cite{GarRus04}
\end{tabular}
& 
\begin{tabular}{ll}$O(n\log\log n)$ & by Garg--Rusu'03~\cite{GarRus03}\\
$O(nc^{\log^*n})$ & {\bf new}\end{tabular}
\\[3ex]\hline&&\\[-1ex]
\begin{tabular}{l}upward\end{tabular} & 
\begin{tabular}{l}$O(n\log\log n)$\\by Shin--Kim--Chwa'96~\cite{ShKiCh}\end{tabular} 
& 
\begin{tabular}{ll}
$O(n^{1.48})$ & by Chan'99~\cite{SODA99}\\
$O(nc^{\sqrt{\log n}})$ & by Chan'99~\cite{SODA99}\\
$O(n\log n)$ & by Garg--Rusu'03~\cite{GarRus03}\end{tabular}
\\[4.5ex]\hline&&\\[-1ex]
\begin{tabular}{l}strictly\\ upward \end{tabular} & 
\begin{tabular}{l}$\Theta(n\log n)$\\ by 
standard alg'm~\cite{Cre}\end{tabular} &
\begin{tabular}{ll}
$O(n^{1.48})$ & by Chan'99~\cite{SODA99}\\
$O(nc^{\sqrt{\log n}})$ & by Chan'99~\cite{SODA99}\\
$\Theta(n\log n)$ & by Garg--Rusu'03~\cite{GarRus03}\end{tabular}
\end{tabular}
\vspace{1.5ex}
\caption{Worst-case area bounds for straight-line drawings of \emph{binary trees}.
}
\end{table}

Next we turn to drawings of binary trees, where there has
been a large body of existing work, due to the many combinations
of aesthetic criteria that may be imposed.  We may
consider
\begin{itemize}
\item {\em upward\/} drawings, as defined earlier;
\item {\em strictly upward\/} drawings, where 
the $y$-coordinate of
each node is strictly greater the $y$-coordinate
of each child;
\item {\em order-preserving\/} drawings, where
the order of children of each node $v$ is preserved, i.e., 
the parent, the left child, and the right child of $v$ appear
in counterclockwise order around~$v$;
\item {\em orthogonal\/} drawings, where
all edges are drawn with horizontal or vertical line segments\PAPER{ 
(see Figure~\ref{fig_ex}(b))}.
\end{itemize}

Tables 2--3 summarize the dizzying array of known 
results on straight-line drawings.  (To keep the table size down,
we omit numerous other results on
poly-line drawings, and
on special subclasses of balanced trees.
See Di Battista and Frati's survey~\cite{DiFraSURV} for more.)

\PAPER{\paragraph{New results.}}
\LIPICS{\subparagraph{New results.}}
In this paper, we concentrate on two of the previous $O(n\log\log n)$
entries in the table.
In 1996, Shin, Kim, and Chwa~\cite{ShKiCh} and Chan et al.~\cite{GD96}
independently obtained $O(n\log\log n)$-area algorithms for
straight-line orthogonal drawings of binary trees;
a few years later, Garg and Rusu~\cite{GarRus03}
adapted their technique to obtain similar results for straight-line
(non-orthogonal) order-preserving drawings.
We improve the area bound for both types of drawings 
to \emph{almost\/} linear: $n2^{O(\log^*n)}$, where $\log^*$ denotes the
iterated logarithm.  (We can also obtain width--height tradeoffs for these drawings.)

\begin{table}
\begin{tabular}{l|l|l}
  & non-order-preserving & order-preserving\\[.5ex]\hline&&\\[-1ex]
\begin{tabular}{l}non-\\upward\end{tabular} & 
\begin{tabular}{ll}$O(n\log\log n)$ & by Chan--Goodrich--\\
& Kosaraju--Tamassia'96~\cite{GD96}\\
& \& Shin--Kim--Chwa'96~\cite{ShKiCh}\\
$O(nc^{\log^* n})$ & {\bf new}
\end{tabular}
& 
\begin{tabular}{ll}$O(n^{3/2})$ & Frati'07~\cite{Fra07}\\
$O(nc^{\sqrt{\log n}})$ & {\bf new}
\end{tabular}
\\[5.5ex]\hline&&\\[-1ex]
\begin{tabular}{l}upward\end{tabular} & 
\begin{tabular}{ll}$\Theta(n\log n)$ & by standard alg'm~\cite{Cre}\end{tabular} 
& \begin{tabular}{l}$\Theta(n^2)$\end{tabular}
\end{tabular}
\vspace{1.5ex}
\caption{Worst-case area bounds for straight-line \emph{orthogonal\/} drawings of \emph{binary trees}.  (Strictly upward drawings are not
possible here.)} 
\end{table}

Although improving $\log\log n$ to iterated logarithm may
not come as a total surprise,
the problem for straight-line
orthogonal drawings has resisted attack for 20 years.
(Besides, improvement should not be taken for granted,
since there is at least one class of drawings
for which $\Theta(n\log\log n)$ turns out to be tight: poly-line upward orthogonal drawings of binary trees~\cite{GarSoCG93}.)

We have additionally one more result on straight-line orthogonal
order-preserving drawings of binary trees: in 2007,
Frati~\cite{Fra07} presented an $O(n^{3/2})$-area algorithm.  We
improve the bound to $n2^{O(\sqrt{\log n})}$, which in particular
is better than $O(n^{1+\eps})$ for any constant $\eps>0$.

These results represent significant progress towards
Open Problems 9, 12, and 14 listed in Di Battista and Frati's
survey~\cite{DiFraSURV}.


(The author has obtained still more new results, on a special class of
so-called {\em LR drawings\/} of binary trees~\cite{SODA99,FraSODA17},
making progress on Open Problem~10 in the survey,
which will be reported later elsewhere.)

We will describe the $n2^{O(\log^*n)}$ algorithm for
orthogonal drawings first,
in Section~\ref{sec:logstar}; the algorithm
for non-orthogonal order-preserving drawings is similar,
\PAPER{as noted in Section~\ref{sec:logstar2}}\LIPICS{and is described in the full paper}.
The $n2^{O(\sqrt{\log n})}$ algorithm for orthogonal order-preserving drawings
\PAPER{is presented in Section~\ref{sec:orthord}}\LIPICS{is also deferred to the full paper}.


\PAPER{\paragraph{Techniques.}}
\LIPICS{\subparagraph{Techniques.}}
Various tree-drawing techniques have been identified 
in the large body of
previous work, and
we will certainly draw upon some of these existing
techniques in our new algorithms---in particular, the 
use of ``skewed'' centroids for divide-and-conquer in trees
(see Section~\ref{sec:general} for the definition), and
height--width tradeoffs to obtain
better area bounds.

However, as
the unusual bounds would suggest,
our $n2^{O(\sqrt{\log\log n\log\log\log n})}$ 
and our $n2^{O(\log^*n)}$ algorithms will require
new forms of recursion and bootstrapping.

Our $n2^{O(\sqrt{\log\log n\log\log\log n})}$ result
for arbitrary trees requires novelty not just in fancier
recurrences, but also in geometric insights.
All existing divide-and-conquer algorithms for tree drawings
divide a given tree into subtrees and recursively draw
different subtrees in different,
disjoint axis-aligned bounding boxes.  We will depart
from tradition and draw
some parts of the tree in distorted grids inside narrow sectors,
which are remapped to regular grids through affine transformations
every time we bootstrap.  The key is a geometric observation that
any two-dimensional convex set (however narrow) containing a large number of integer points
must contain a large subset of integer points forming a grid after affine transformation (with unspecified aspect ratio).
The proof of the observation follows from well known facts
about lattices and basis reduction (by Gauss)---a touch of elementary number theory suffices.  We are not aware of previous applications
of this geometric observation, which seems potentially useful for
graph drawing on grids in general.

Our $n2^{O(\log^*n)}$ result is noteworthy,
because occurrences of iterated logarithm are rare in graph drawing 
(to be fair, we should mention that it has appeared before in
one work by Shin et al.~\cite{ShiIPL}, on
poly-line orthogonal drawings of binary trees with $O(1)$ bends per edge).  We realize that more can be gained from
the recursion in the previous $O(n\log\log n)$ algorithm,
by bootstrapping.  This requires a careful setup of 
the recursive subproblems,
and constant switching of $x$ and $y$ (width and height) every time
we bootstrap.  (The author is reminded of an algorithm by
Matou\v sek~\cite{Mat} on a completely different problem, Hopcroft's
problem, where iterated logarithm arose due to constant
switching of points and lines by duality at each level of recursion.)

Our $n2^{O(\sqrt{\log n})}$ result for orthogonal order-preserving
drawings  has the largest quantitative improvement compared to previous results, but actually requires the least originality in techniques.
We use the exact same form of recursion as in an earlier algorithm of
Chan~\cite{SODA99} for non-orthogonal upward order-preserving drawings, although the new algorithm requires trickier details.

\section{Straight-Line Upward Drawings of Arbitrary Trees}
\label{sec:general}

In this section, we consider arbitrary (rooted) trees
and describe our first algorithm to produce straight-line upward
drawings with $o(n\log n)$ area.  It serves as 
a warm-up to the further improved algorithm in Section~\ref{sec:general2}
when upwardness is dropped.

\subsection{Preliminaries}

We begin with some basic number-theoretic and tree-drawing
facts.  The first, on the denseness of relatively prime pairs,
is well known:

\begin{fact}\label{fact:coprime}
There are $\Omega(AB)$ relatively prime pairs 
in $\{1,\ldots,A\}\times\{\down{B/2}+1,\ldots,B\}$.
\end{fact}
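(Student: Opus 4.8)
The plan is to count relatively prime pairs $(a,b)$ with $a \in \{1,\ldots,A\}$ and $b \in \{\lfloor B/2\rfloor+1,\ldots,B\}$ by a standard inclusion–exclusion over the smallest prime factor, but organized so that the error terms stay under control because $b$ ranges over an interval of length $\Theta(B)$ bounded away from $0$. First I would fix $b$ in the given range and ask: among $a \in \{1,\ldots,A\}$, how many satisfy $\gcd(a,b)=1$? By inclusion–exclusion over the squarefree divisors $d$ of $b$, this count is $\sum_{d \mid b} \mu(d)\lfloor A/d\rfloor = A\sum_{d\mid b}\mu(d)/d + O(\tau(b)) = A\,\frac{\varphi(b)}{b} + O(\tau(b))$, where $\tau$ is the divisor function. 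Summing over $b$ in the interval, the main term is $A\sum_b \varphi(b)/b$ and the error is $O(\sum_b \tau(b)) = O(B\log B)$, which is only $o(AB)$ if $A$ is not too small — so I need a cleaner route that avoids assuming $A$ is large.

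The cleaner route is to not break the symmetry, and instead sum $\mu(d)$ over pairs directly: the number of coprime pairs is $\sum_{d\ge 1}\mu(d)\cdot N_A(d)\cdot N_B(d)$, where $N_A(d) = \lfloor A/d\rfloor$ is the count of multiples of $d$ in $\{1,\ldots,A\}$ and $N_B(d) = \lfloor B/d\rfloor - \lfloor (\lfloor B/2\rfloor)/d\rfloor$ is the count of multiples of $d$ in the upper half. The key gain is that $N_B(d) = 0$ whenever $d > B$, and more importantly $N_B(d) \le B/d - B/(2d) + 1 = B/(2d)+1$, while also $N_B(d)=0$ unless $d \le B$; but the truly useful fact is that for $d$ in the range $(\lfloor B/2\rfloor, B]$ we have $N_B(d) \in \{0,1\}$, and for small $d$ we get the expected density $\approx B/(2d)$. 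I would split the sum at $d = A$ (assuming WLOG $A \le B$; if $A > B$ swap the roles, or just note $N_A(d)\le A/d$ always): for $d \le A$, $\mu(d)N_A(d)N_B(d) = \mu(d)(A/d + O(1))(B/(2d)+O(1))$, whose main contribution is $\frac{AB}{2}\sum_{d\le A}\mu(d)/d^2$, and $\sum_{d=1}^{\infty}\mu(d)/d^2 = 1/\zeta(2) = 6/\pi^2 > 0$, so the main term is $\Omega(AB)$; for $A < d \le B$ we have $N_A(d)=0$, so those terms vanish entirely. The error terms from the $O(1)$ slacks sum to $O\!\big(\sum_{d\le A}(A/d + B/d + 1)\big) = O((A+B)\log A + A) = O(B\log A)$, which is $o(AB)$ precisely when $A = \omega(\log A)$ — still a mild restriction I'd like to remove.

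To remove even that restriction and get a fully unconditional $\Omega(AB)$, I would instead restrict the inclusion–exclusion to a truncated sum and bound the tail by Rankin's trick or a crude $\sum_{d>D} 1/d^2$ estimate: writing the count as $\sum_{d \le D}\mu(d)N_A(d)N_B(d) + (\text{tail})$ with the tail bounded in absolute value by $\sum_{d>D} N_A(d)N_B(d) \le \sum_{d > D}(A/d)(B/d+1) = O(AB/D + B\log(B/D))$, and choosing $D$ a large constant, the truncated main term is $\frac{AB}{2}\sum_{d\le D}\mu(d)/d^2 + O((A+B)\log D)$; since $\sum_{d\le D}\mu(d)/d^2 \to 6/\pi^2$, for $D$ a sufficiently large absolute constant this partial sum exceeds $1/2$, giving a main term $\ge AB/4$, while all the error/tail terms are $O(AB/D + A + B)$, which for $D$ large and $A,B$ beyond a constant is at most $AB/8$; the finitely many remaining small cases of $A$ or $B$ are checked by hand (e.g. $B \ge 2$ forces the upper half to be nonempty and $\{1\}\times\{b\}$ always contributes). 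The main obstacle, and the only place real care is needed, is exactly this bookkeeping of the error against the constant main term $6/\pi^2$ — making sure the implied constants in the $O(\cdot)$ terms are genuinely absolute and that the truncation parameter $D$ can be chosen uniformly; everything else is the textbook Möbius/Mertens computation. I expect the final writeup to be just a few lines once the truncation constant is pinned down.
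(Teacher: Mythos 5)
Your argument is correct in outline but takes a genuinely different, and much heavier, route than the paper's. The paper's entire proof is a one-line union bound over shared prime factors: the number of \emph{non}-coprime pairs in the full rectangle $\{1,\ldots,A\}\times\{1,\ldots,B\}$ is at most $\sum_{p}\lfloor A/p\rfloor\lfloor B/p\rfloor\le AB\sum_p p^{-2}<0.453\,AB$, while the half-rectangle $\{1,\ldots,A\}\times\{\lfloor B/2\rfloor+1,\ldots,B\}$ contains at least $0.5\,AB$ pairs, so at least $0.047\,AB$ of them are coprime. This is exactly why the statement restricts $b$ to the top half of its range: the point is that the prime zeta value $\sum_p p^{-2}\approx 0.4522$ is strictly less than $1/2$, so no M\"obius inversion, truncation parameter, or error bookkeeping is needed, and the bound is uniform in $A$ and $B$ with no residual small cases. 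Your truncated M\"obius sum reaches the same conclusion (indeed with the sharper density $3/\pi^2$ in the main term) and is the textbook computation, but it buys that sharper constant at the cost of the very bookkeeping you flag as the main obstacle. Two loose ends in your sketch would need repair in a full writeup: (i) the rectangle is not symmetric in $A$ and $B$ --- only the $b$-range is halved --- so ``swap the roles'' is not literally available, though the asymmetric bounds $N_A(d)\le A/d$, $N_A(d)=0$ for $d>A$, and $N_B(d)\le B/(2d)+2$ make the same estimate go through; and (ii) the leftover cases are not finitely many: $B=O(1)$ with $A\to\infty$ is an infinite family in which your cross error term of order $A\log\min\{A,B\}$ must still be beaten, and the single pair $(1,b)$ certainly does not give $\Omega(AB)$ there; you would instead fix one $b$ in the upper half and count the at least $A\,\varphi(b)/b-2^{\omega(b)}=\Omega(A)$ values of $a$ coprime to it. Both are fixable, so your approach does yield the Fact, but the paper's prime-union-bound argument sidesteps all of it.
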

\PAPER{
\begin{proof}
The number of pairs in $\{1,\ldots,A\}\times\{1,\ldots,B\}$ that
are \emph{not\/} relatively prime is 
\[
 \le\ \sum_{\mbox{\scriptsize prime }p}\down{\frac Ap}\down{\frac Bp}
 \ \le\
 AB\sum_{\mbox{\scriptsize prime }p} \frac1{p^2}\ <\ 0.453 AB,
\]
whereas the total number of pairs in $\{1,\ldots,A\}\times\{\down{B/2}+1,\ldots,B\}$ is
 $\ge 0.5AB$.
\end{proof}
}

Next, we consider drawing trees not on the integer grid but on a user-specified set of points.
We note that any point set of near linear size that is not 
too degenerate is ``universal'', in the sense that it can be used to
draw any tree.

\begin{fact}\label{fact:universal}
Let $P$ be a set of $(\ell-1) n-\ell+2$ points in the plane, with no $\ell$ points
lying on a common line.
Let $T$ be a tree of size $n$.  
Then $T$ has a straight-line upward drawing
where all vertices are drawn in $P$.
\end{fact}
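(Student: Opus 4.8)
The plan is to prove the statement by induction on $n$, in a slightly strengthened form: if $P$ has at least $(\ell-1)n-\ell+2$ points, no $\ell$ of them collinear, then $T$ has a straight-line upward drawing with all vertices in $P$ whose root is drawn at the \emph{topmost} point of $P$. Here ``topmost'' refers to a fixed lexicographic order on the plane (largest $y$-coordinate, ties broken by smallest $x$-coordinate); since upwardness asks only for a non-strict inequality on $y$-coordinates, putting the root at the topmost point is consistent with it. The base case $n=1$ is trivial. For the inductive step, write $r$ for the root and $T_1,\dots,T_d$ for its child subtrees, of sizes $n_1,\dots,n_d$ with $\sum_i n_i = n-1$; let $p$ be the topmost point of $P$ and draw $r$ at $p$. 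Because $p$ is topmost, every other point of $P$ lies weakly below the horizontal line through $p$, so the directions of the rays from $p$ through the remaining points all lie within a half-circle and can be sorted by angle.

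Next I would carve $P\setminus\{p\}$ into angularly consecutive groups $Q_1,\dots,Q_d$, discarding a few points, so that $|Q_i|\ge (\ell-1)n_i-\ell+2$ and the groups can be separated by rays through $p$ into pairwise disjoint open sectors $W_i\supseteq Q_i$. Sweep through the points in angular order: put the next $(\ell-1)n_1-\ell+2$ of them into $Q_1$, then keep appending points that are collinear with $p$ and the most recently appended point (so that no single direction from $p$ is split between two groups), then start $Q_2$, and so on; $Q_d$ takes all remaining points. The hypothesis that no $\ell$ points are collinear gives at most $\ell-2$ points of $P\setminus\{p\}$ on any line through $p$, so each group absorbs at most $\ell-2$ points beyond its quota; since $(\ell-1)n-\ell+1=(\ell-1)(n-1)=\sum_i\bigl((\ell-1)n_i-\ell+2\bigr)+d(\ell-2)$, the slack $d(\ell-2)$ pays for all these surpluses and still leaves $Q_d$ at or above its quota. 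Consecutive groups are now separated by an angular gap, which supplies the separating rays and the disjoint sectors.

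For the last step, for each $i$ let $t_i$ be the topmost point of $Q_i$. By the induction hypothesis $T_i$ has a straight-line upward drawing with vertices in $Q_i$ and with its root $r_i$ at $t_i$; this drawing lies in $\mathrm{conv}(Q_i)\subseteq\overline{W_i}$, and I draw the edge $rr_i$ as the segment $pt_i$, which also lies in $\overline{W_i}$ by convexity. Upwardness is immediate: $y(p)\ge y(t_i)$ since $p$ is topmost in $P\supseteq Q_i$, and each $T_i$ is drawn upward by induction. For planarity, the segments $pt_i$ and the drawings of the distinct $T_j$ lie in pairwise disjoint sectors and so cannot cross one another; the one substantive check is that $pt_i$ does not meet the drawing of $T_i$ away from $t_i$, and this holds because every vertex of that drawing other than $t_i$ has $y$-coordinate at most $y(t_i)$, hence at most the $y$-coordinate of every point of the segment $pt_i$.

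The genuinely delicate point is none of this structure but the treatment of ties in $y$-coordinate --- this is exactly what the lexicographic refinement is for. Even with it in place, the ``at most $y(t_i)$'' argument is not quite enough by itself when $pt_i$ or a topmost edge of $T_i$'s drawing is horizontal; one must additionally use that $t_i$ is the \emph{leftmost} topmost point of $Q_i$ and $p$ the leftmost topmost point of $P$ to rule out some vertex of $T_i$'s drawing, or the point $p$ itself, landing in the relative interior of such an edge. These checks are routine but do have to be carried out. Everything else is elementary: this is essentially the classical argument that a sufficiently generic point set is universal for trees, adapted here to respect upwardness and to tolerate up to $\ell-1$ collinear points at the cost of blowing up $|P|$ by a factor of roughly $\ell-1$.
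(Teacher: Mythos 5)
Your proposal is correct and follows essentially the same route as the paper: place the root at the (leftmost) topmost point, split the remaining points into $d$ disjoint sectors by rays through that point --- using the no-$\ell$-collinear hypothesis to bound the per-sector surplus by $\ell-2$ and the identity $(\ell-1)(n-1)=\sum_i((\ell-1)n_i-\ell+2)+d(\ell-2)$ to make the quotas fit --- and recurse, with planarity following from sector disjointness and the subtree roots being topmost in their sectors. Your write-up is merely more explicit than the paper's about the angular sweep and the tie-breaking needed for the horizontal-edge degeneracies.
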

\begin{proof}
%
We describe a straightforward recursive algorithm:
Let $n_1,\ldots,n_d$ be the sizes of the subtrees $T_1,\ldots,T_d$
at the children of the root $v_0$,
with $\sum_{i=1}^d n_i = n-1$.
Place $v_0$ at the highest point $p_0$ of $P$ (in case of ties, prefer the leftmost highest point).
Form $d$ disjoint sectors with apex at $p_0$,
so that the $i$-th sector $S_i$ contains between
$(\ell-1) n_i-\ell+2$ and $(\ell-1)n_i$ points of $P-\{p_0\}$.
This is possible since any line through $p_0$ contains at most $\ell-2$  points of $P-\{p_0\}$, and $\sum_{i=1}^d (\ell-1) n_i = (\ell-1)(n-1) = |P-\{p_0\}|$.
For each $i=1,\ldots,d$, recursively draw $T_i$ using $(\ell-1) n_i-\ell+2$ points
of $P\cap S_i$.  Lastly, draw the edges from $v_0$
to the roots of the $T_i$'s (these edges create no crossings since the roots are drawn at the highest points of $P$ in their respective sectors).  
The base case $n=1$ is trivial.
%
\end{proof}

The following is a slight generalization of the
standard algorithm (mentioned in the introduction)
for straight-line upward drawings of
general trees with width $O(n)$ and height $O(\log n)$.
We note that the algorithm can draw any tree
on any point set that ``behaves'' like an $n\times\up{\log n}$
grid.

\fig{fig_pseudogrid}{\PAPER{0.8}\LIPICS{0.7}}{The drawing in Fact~\ref{fact:pseudogrid}.}

\begin{fact}\label{fact:pseudogrid}
Let $G$ be a set of $\up{\log n}$ parallel (non-vertical)
line segments in the plane. 
Let $P$ be a set of $n\up{\log n}$ points, with
$n$ points lying on each of the $\up{\log n}$ line segments in $G$.
Let $T$ be a tree of size $n$.  
Then $T$ has a straight-line drawing
where all vertices are drawn in $P$, and the root is drawn on
the segment of $G$ whose line has the highest $y$-intercept.

Furthermore, if the segments of $G$ are horizontally separated
(i.e., the $y$-projections are disjoint), the drawing is upward.
\end{fact}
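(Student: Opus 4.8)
The plan is to mimic the standard algorithm, but carry it out on the generalized ``pseudo-grid'' $P$ instead of a true integer grid, using Fact~\ref{fact:universal} to handle the subtrees that are too small to span all $\up{\log n}$ levels. Order the segments of $G$ by the $y$-intercepts of their lines, so that $g_1$ has the highest intercept; write $k=\up{\log n}$. First I would place the root $v_0$ at some chosen point of $P$ on $g_1$ (say, the leftmost one), and split the children subtrees $T_1,\dots,T_d$ into two groups by size: the ``big'' subtrees, of size $> n/2$ --- of which there is at most one, call it $T_d$ --- and the ``small'' subtrees $T_1,\dots,T_{d-1}$, each of size $\le n/2$.

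The small subtrees are disposed of with Fact~\ref{fact:universal}. Observe that the $n k$ points of $P$ minus the one occupied by $v_0$, together with the fact that no more than $n$ of them lie on any single line of $G$ (and we can also control how many lie on other lines --- but crucially $k-1 = \up{\log n} - 1$ parallel lines carry all but $n-1$ of the remaining points, so no line contains too many), still has no $(n{+}1)$ points collinear; so by Fact~\ref{fact:universal} any subtree $T_i$ of size $n_i$ can be upward-drawn inside any sub-point-set of size $n\,n_i - n + 2$ with no $(n{+}1)$ collinear points. I would form $d-1$ disjoint sectors with apex at $v_0$, the $i$-th receiving roughly $n\,n_i$ points of $P$; the total point budget for the small subtrees is at most $n\sum_{i<d} n_i \le n(n-1)$, which fits. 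The key point is that each small subtree's drawing can be made upward by the ``furthermore'' clause of Fact~\ref{fact:universal}: we place $v_0$ at the globally highest point in $P$, so the edge from $v_0$ to each subtree root descends without crossings, exactly as in the proof of Fact~\ref{fact:universal}.

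For the single big subtree $T_d$, I would recurse on the pseudo-grid obtained by deleting the top segment $g_1$ (more precisely, deleting one point from $g_1$ to host $v_0$ and discarding the rest of $g_1$, or, better, keeping all of $g_1$'s remaining points inside the small-subtree sectors and recursing on $g_2,\dots,g_k$). Since $|T_d| \le n-1 < n$, it needs only $\up{\log(n-1)} \le k-1$ segments, and $g_2,\dots,g_k$ supply exactly $k-1$ segments each carrying $n \ge |T_d|$ points; by induction $T_d$ embeds on them with its root on $g_2$, the highest remaining segment, and the edge $v_0$-to-root$(T_d)$ goes downward. Placing $T_d$'s drawing below $v_0$ (it already is, since $g_1$ is on top) and the small subtrees in sectors fanning out below $v_0$ as well, we must check that nothing crosses: the sectors are pairwise disjoint by construction, and $T_d$'s drawing sits in the lower $k-1$ levels while $v_0$ sits alone on the top level --- so I would arrange the small-subtree sectors to avoid the region used by $T_d$, which is possible since $T_d$ occupies only a downward cone from $v_0$ that we can reserve as one of the sectors.

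The main obstacle I anticipate is the bookkeeping that makes the point budgets balance simultaneously with the geometric disjointness: Fact~\ref{fact:universal} needs the sub-point-sets to have no $\ell$ points collinear (here $\ell = n{+}1$), but after carving sectors through $v_0$ and after removing the top segment for the recursion, I must verify that every relevant subset still satisfies this --- the danger being that many points of $P$ could lie on some non-$G$ line. This is controlled because $P$ has only $n$ points per $G$-segment and the $G$-segments are parallel, so a transversal line meets each in $\le 1$ point, giving $\le k \le \up{\log n} \le n$ collinear points for large $n$ (and small $n$ is trivial); but making this airtight for the recursion --- where the pseudo-grid shrinks --- is the delicate part. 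The upwardness ``furthermore'' then follows for free from two facts: the top segment is horizontally separated from the rest when the $y$-projections are disjoint, so $v_0$'s $y$-coordinate strictly exceeds everything below, and each recursive/Fact-\ref{fact:universal} call is itself upward by induction.
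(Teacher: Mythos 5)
Your point budget is the fatal step. You invoke Fact~\ref{fact:universal} with $\ell=n+1$, so a ``small'' subtree of size $n_i$ consumes $(\ell-1)n_i-\ell+2=n\,n_i-n+1$ points; two subtrees of size about $n/2$ already demand roughly $n^2-3n$ points, and you then assert that a total budget of at most $n(n-1)$ ``fits'' --- but the entire point set $P$ has only $n\up{\log n}\ll n^2$ points. Fact~\ref{fact:universal} is simply too wasteful to absorb subtrees of size up to $n/2$ when $\ell$ is as large as $n+1$. A second, independent error is the level accounting for the big subtree: you send $T_d$ (of size up to $n-1$) down to the $k-1$ remaining segments on the grounds that $\up{\log(n-1)}\le\up{\log n}-1$, which is false in general (e.g.\ $n=10$ gives $\up{\log 9}=4=\up{\log 10}$). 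This is exactly why the standard algorithm keeps the heaviest subtree on the \emph{same} row as the root instead of dropping it a level; as written your recursion degrades to linear depth. There are further unresolved geometric issues you yourself flag: the small-subtree sectors must avoid the region spanned by $T_d$'s drawing even though both draw on points interleaved along the same segments, and in the non-separated case $v_0$ need not be the highest point of $P$, which is what the crossing-freeness argument in Fact~\ref{fact:universal} relies on.

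The paper's proof avoids all of this by not re-running the recursion on the pseudo-grid at all. It runs the standard algorithm once to obtain a grid drawing of $T$ with width at most $n$ and height at most $\up{\log n}$, then maps the $i$-th topmost row of that drawing to the $i$-th segment of $G$ (ordered by decreasing $y$-intercept of the supporting lines), preserving the left-to-right order of vertices within each row. Every edge of the standard drawing joins vertices in the same row or in consecutive rows, so its image lies on a segment or in the strip between two consecutive parallel segments, where the preserved ordering rules out crossings; upwardness in the horizontally separated case is immediate. If you want to salvage your write-up, the fix is to transport the already-known grid drawing rather than rebuild it point by point.
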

\begin{proof}
Without loss of generality, assume that the segments have negative slope,
and arrange the segments of $G$ in decreasing order of $y$-intercepts.
Apply the standard algorithm to get
a straight-line upward grid drawing of $T$
with width at most $n$ and height at most $\up{\log n}$.
Map the vertices on the $i$-th topmost row of the grid drawing 
to the points on the $i$-th segment of $G$, while preserving the left-to-right ordering of the vertices.
(See Figure~\ref{fig_pseudogrid}.)
The resulting drawing is planar (since each edge is drawn either
on a segment or in the region between two consecutive segments,
and there are no crossings in the region between two consecutive segments).
Note that the drawing is upward if the segments of $G$ are horizontally
separated.
\end{proof}

\subsection{The augmented-star algorithm}\label{sec:star}

\newcommand{\LL}{\overline{L}}

The main difficulty of drawing arbitrary trees
is due to the presence of vertices of large degree.
In the extreme case when the tree is a star of size $n$,
we can produce a straight-line drawing of width $O(A)$ and
$O(n/A)$ for any given $1\le A\le n$, by placing the
root at the origin and placing the remaining vertices
at points with \PAPER{relatively }\LIPICS{co-}prime $x$- and $y$-coordinates,
using Fact~\ref{fact:coprime}.

We first study a slightly more general special case which  we call \emph{augmented stars}, where the input tree
is modified from a star by attaching to each leaf
a small subtree of size \PAPER{at most $s$}\LIPICS{$\le s$}.

\begin{lemma}\label{lem:star}
Let $T$ be a tree of size $n$ such that the subtree at 
each child of the root has size at most $s$.
For any given $n\ge A\ge 1$,
$T$ has a straight-line upward drawing 
with width $O(A\log s)$ and height $O((n/A)\cdot s\log^2 s)$,
where the root is placed at the top left corner
of the bounding box,
and the left side of the box contains no other vertices.
\end{lemma}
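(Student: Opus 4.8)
The plan is to use a skewed centroid decomposition of the star-like tree combined with the ``pseudo-grid'' drawing of Fact~\ref{fact:pseudogrid}, applied repeatedly inside narrow sectors anchored at the root. First I would partition the $d$ children-subtrees $T_1,\dots,T_d$ (each of size $\le s$) into groups according to a geometric progression: group $j$ consists of subtrees whose sizes lie in $(s/2^{j+1}, s/2^j]$, for $j = 0,1,\dots,O(\log s)$. Within group $j$ there are at most $O((n/A)\cdot 2^j)$ subtrees if we further subdivide, but more usefully, the \emph{total} size of the subtrees we plan to pack into a single sector will be chosen to be about $A$, so that each sector can be drawn as a pseudo-grid of width $O(A)$ and height $O(\log(\text{that total size}))$. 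The root sits at the common apex; the sectors are thin wedges emanating from it, and the edges from the root to the roots of the $T_i$'s run along (or near) the sector boundaries without crossings, exactly as in the proof of Fact~\ref{fact:universal}.

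The key step is the drawing inside one sector. Inside a thin sector with apex at the root, I would place $\up{\log m}$ parallel line segments (where $m$ is the total number of \emph{vertices} assigned to that sector, $m = O(A)$ up to the $s$-factor), each carrying $\Theta(m)$ points, chosen so that (i) the segments are horizontally separated so the resulting drawing is upward, and (ii) the points on each segment can be realized on the integer grid. Point (ii) is where Fact~\ref{fact:coprime} enters: to fit $\Theta(m)$ collinear-looking grid points into a narrow wedge of width $O(A)$ (hence height $O(m/A)$ per segment, roughly), I want a segment of slope near some rational $p/q$ with $p,q$ coprime and $q = \Theta(m/A)$, and along it the lattice points are spaced out; abundance of coprime pairs (Fact~\ref{fact:coprime}) guarantees I can choose the $\up{\log m}$ slopes all distinct, all with the right magnitude, so the $\up{\log m}$ segments genuinely fit side by side in total width $O(A)$ and total height $O((m/A)\log m)$. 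Then Fact~\ref{fact:pseudogrid} draws the forest of subtrees-in-this-sector (viewed as a single tree by adding a dummy root, or handled directly as a forest) with the root of each $T_i$ landing on the topmost available segment of its own sub-sector.

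Assembling the bound: there are $O(\log s)$ size-groups; within each group the subtrees have near-equal size $\Theta(s/2^j)$, so I can pack $\Theta(A/(s/2^j))$ of them per sector to make each sector hold $\Theta(A)$ vertices, giving $O((n/A)/2^j \cdot \log s)$-ish sectors overall — but I would instead simply lay \emph{all} sectors side by side: total width is (number of sectors) $\times O(\text{width per sector})$. Being careful, the arithmetic should come out to width $O(A\log s)$ (the $\log s$ from the number of size-groups, or from stacking $\log s$ ``columns'') and height $O((n/A)\cdot s\log^2 s)$ (one $\log s$ from $\up{\log m}$ with $m\le s\cdot(\text{stuff})$, another from the number of height-bands needed to stack the $\Theta(n/(As))$-ish sectors of each group, with the extra factor $s$ because the smallest subtrees force the coarsest packing). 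The root ends up at the top-left corner by placing it at the apex and orienting every sector to open downward and to the right; the left edge of the bounding box is the leftmost sector boundary, which carries no vertex other than the root by construction.

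The main obstacle I expect is point (ii) above: making the $\up{\log m}$ parallel segments simultaneously (a) fit within total width $O(A)$, (b) be horizontally separated for upwardness, and (c) each contain $\Theta(m)$ \emph{integer} points at the correct spacing. Coprimality gives enough distinct slopes of the right size, but verifying that the segments can be translated to be pairwise horizontally-separated while still each hitting enough lattice points — and that the edges from the apex to the segment endpoints create no crossings across the $O(\log s)$ columns of sectors — will require the fiddly geometric bookkeeping, and is presumably where the $\log^2 s$ rather than $\log s$ in the height is lost. The recursion on the small subtrees themselves is not an obstacle: each has size $\le s$, so drawing each $T_i$ crudely (e.g.\ by the standard algorithm, width $\le s$, height $O(\log s)$, or even $1\times s$) inside its allotted slot of the pseudo-grid is affordable and accounts for the leftover $s$ and $\log s$ factors.
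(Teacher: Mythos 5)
There is a genuine gap, and it sits exactly where you flagged ``fiddly geometric bookkeeping'': that step is the crux of the lemma, and your plan for it does not go through. The paper's proof gives each subtree $T_i$ its \emph{own} sector $S_i$, containing exactly $\ell n_i$ points of the coprime point set $P$ with $\ell=s\up{\log s}$; this is possible precisely because every line through the origin meets $P$ in at most one point, which is the actual role of Fact~\ref{fact:coprime} here. The shapes of these sectors are then \emph{forced} by the subtree sizes --- you cannot choose their slopes or angular widths --- so what is needed is a robustness statement that works for an \emph{arbitrary} set of $\ell n_i$ lattice points inside a sector. The paper gets this from a dichotomy your proposal never uses: either $S_i\cap P$ has no $\ell$ collinear points, in which case Fact~\ref{fact:universal} (which you do not invoke at all) draws $T_i$ directly on those points; or $S_i\cap P$ contains $\ell$ points on a common line $L$, in which case the scaled copies of a piece of $L$ by factors $1,2,\ldots,\up{\log s}$ form the parallel-segment family needed for Fact~\ref{fact:pseudogrid}. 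You replace this dichotomy with a direct construction of a pseudo-grid in every sector, and that is where the argument breaks.

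Concretely: (i) you ask for $\up{\log m}$ \emph{parallel} segments whose ``slopes [are] all distinct,'' which is self-contradictory, and Fact~\ref{fact:pseudogrid} genuinely requires the segments to be parallel; moreover, for a fixed coprime slope there is no guarantee that each of $\up{\log m}$ translates placed inside a thin, adaptively-determined wedge contains $\Theta(m)$ integer points --- the lattice points of such a wedge may be concentrated on very few lines, which is exactly the degenerate case the paper's Case~2 is designed to exploit rather than avoid. (ii) Packing several subtrees into one sector creates a planarity problem you dismiss by citing Fact~\ref{fact:universal}: in that proof each subtree has its own sector and its root is the \emph{highest} point of that sector, which is why the apex can see it; with many subtrees sharing one pseudo-grid, the edges from $v_0$ to their roots would have to cross the drawings of the other subtrees, so you are pushed back to one subtree per (sub-)sector and lose the freedom to choose the sector shapes. (iii) Your accounting for the factor $s$ and the two factors of $\log s$ in the height is not a derivation; in the paper one $\log s$ is the scaling factor $t=\up{\log s}$ (which also multiplies the width, giving $O(A\log s)$), and the $s\log s$ is the multiplicity $\ell$ of points allocated per tree vertex, needed both to meet the hypothesis of Fact~\ref{fact:universal} and to guarantee $s$ points per segment after slicing $L$ into $t$ horizontal slabs --- the slab of height $B/(2t)$ at depth at least $B/2$ is also what makes the scaled segments horizontally separated and hence the drawing upward. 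None of this is routine bookkeeping; it is the content of the proof.
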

\begin{proof}
\newcommand{\PPP}{{\cal P}}
Let $\ell = s\up{\log s}$.
Let $B=\up{c\ell n/A}$ for some constant $c$.
Let $P=\{(x,y)\in\{1,\ldots,A\}\times\{-B,\ldots,-\down{B/2}-1\}: \mbox{$x$ and $y$ are relatively prime}\}$.
By Fact~\ref{fact:coprime}, $|P|=\Omega(AB)$, and so $|P|\ge \ell n$
by making $c$ sufficiently large. 


Let $n_1,\ldots,n_d$ be the sizes of the subtrees $T_1,\ldots,T_d$
at the children of the root $v_0$,
with $\sum_{i=1}^d n_i = n-1$ and $n_i\le s$ for each $i$.
Place $v_0$ at the origin.  
Form $d$ disjoint sectors, where the $i$-th sector $S_i$
contains exactly $\ell n_i$ points of $P$.  This is possible, since
any line through the origin contains at most one point of $P$
and $\sum_{i=1}^d \ell n_i < \ell n\le |P|$.
\PAPER{(See Figure~\ref{fig_star}.) }%
We will draw $T$ using not just the points of $P$, but also scaled copies of these points, up to scaling factor $t:=\up{\log s}$.


\PAPER{
\fig{fig_star}{1.2}{The augmented-star algorithm in Lemma~\ref{lem:star}.}
}

For each $i$, consider two cases, depending on how degenerate $S_i\cap P$
is:
\begin{itemize}
\item {\sc Case 1:} $S_i$ does not contain $\ell$
points of $P$ on a common line.  Here, we can draw $T_i$
using the $\ell n_i > (\ell-1)n_i - \ell+2$ points of $S_i\cap P$ by Fact~\ref{fact:universal}.
\item {\sc Case 2:} $S_i$ contains $\ell$ points of $P$ on a common line $L$.  (Note that $L$ does not pass through the origin,
by definition of $P$.)  
Let $\sigma$ be a horizontal slab of height $B/(2t)$
that contains at least $\ell/t= s$ points of $L\cap S_i\cap P$.
Let $\LL=L\cap S_i\cap \sigma$.
Let $G$ be the set of $t$ line segments $\LL,2\LL,\ldots,t\LL$,
where $\alpha\LL$ denotes the scaled copy of $\LL$ by factor $\alpha$
(with respect to the origin).  Each of the $t=\up{\log s}$
segments of $G$ contain $s$ integer points inside $S_i$,
and the segments are horizontally separated.
Thus, we can
draw $T_i$ using the integer points on $G$ by
Fact~\ref{fact:pseudogrid}.
\end{itemize}
Lastly, draw the edges from $v_0$ to the roots of the $T_i$'s.
The total width is $O(tA)=O(A\log s)$ and the height is $O(tB)=O((n/A)\cdot s\log^2s)$. 
\end{proof}

\subsection{The general algorithm}\label{sec:overall}

\newcommand{\AAA}{\tilde{A}}

We are now ready to present the algorithm for the general case, using the
augmented-star algorithm as a subroutine:

\begin{theorem}\label{thm:general}
For any given $n\ge A\ge 1$,
every tree $T$ of size $n$ has a straight-line upward
drawing with width $O(A+\log n)$ and
height $O((n/\sqrt{A})\log^2 A)$, where the root is placed at the top left corner of
the bounding box.
\end{theorem}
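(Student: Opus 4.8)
The plan is to give a recursive algorithm that reduces the general case to the augmented-star algorithm of Lemma~\ref{lem:star}, using a heavy-path (``skewed centroid'') decomposition to cope with long, non-star-like parts of the tree. If $A$ is below an absolute constant I would simply run the standard algorithm, whose width $O(\log n)$ and height $O(n)$ already satisfy the claimed bounds. Otherwise, fix a threshold $s$ of order $\sqrt A$ (polylog factors aside) and let $v_0,v_1,\ldots,v_k$ be the \emph{spine} from the root obtained by always descending to the child with the largest subtree. At each spine vertex $v_i$, split its off-spine child subtrees into the \emph{small} ones (size $\le s$) and the \emph{large} ones (size $>s$); the large subtrees are pairwise disjoint, so there are at most $n/s$ of them in total. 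The small subtrees at $v_i$, together with $v_i$ and the spine edges, will be drawn by Lemma~\ref{lem:star} (or a suitable caterpillar extension of it, in the spirit of Fact~\ref{fact:pseudogrid}), while each large subtree will be drawn recursively.

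For the layout I would stack the resulting blocks vertically, with the root at the top-left corner of the topmost block, so that the total width is the \emph{maximum} block width rather than the sum. Runs of consecutive spine vertices carrying little off-spine mass are ``folded'': the corresponding sub-caterpillar (a path plus small hangers) is drawn by snaking the path through a width-$O(A)$ grid and stacking the hangers in the free rows, so that a run of total size $N$ fits in width $O(A)$ and height $O((N/A)\log A)$, which is $O((N/\sqrt A)\log^2 A)$. A spine vertex $v_i$ carrying large off-spine subtrees gets its own block: each large subtree of size $m$ is drawn recursively with parameter $\Theta(A(m/M_i)^2)$, where $M_i$ is the total off-spine size at $v_i$; these parameters sum to $O(A)$, and by induction each such box has height $O((M_i/\sqrt A)\log^2 A)$, so placing the boxes side by side to the right of $v_i$ keeps the block width $O(A)$ and its height $O((M_i/\sqrt A)\log^2 A)$. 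The edges from $v_i$ to the roots of these boxes are routed through the empty horizontal strip just below $v_i$ that the proof of Lemma~\ref{lem:star} provides and down the (empty) left edges of the recursive boxes, exactly as in the sector arguments of Facts~\ref{fact:universal} and~\ref{fact:pseudogrid}. Since the blocks partition the mass of $T$, the total height telescopes to $\sum_{\text{blocks}} O((\text{block mass}/\sqrt A)\log^2 A) = O((n/\sqrt A)\log^2 A)$.

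The step I expect to be the main obstacle is the width accounting and planar routing for the off-spine subtrees that are larger than $s$ but too small to receive a useful width budget (those with $A(m/M_i)^2$ below, say, $\log n$): such ``medium'' subtrees cannot be folded into an augmented star, since Lemma~\ref{lem:star} needs $s$ small to keep its height near-linear, and placing their boxes side by side would push the width past $O(A)$ because each still costs $\Omega(\log)$ horizontally. The resolution is to group them by size class and pack the recursive drawings of each class into a grid with the number of columns tuned to that class (roughly $A/\log(\text{class size})$ columns), which keeps the total width $O(A+\log n)$ — this is precisely where the additive $\log n$ in the width comes from — while the total area argument above still bounds the height by $O((n/\sqrt A)\log^2 A)$. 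Verifying that the column counts can be chosen consistently across all size classes and all levels of the recursion, and that the resulting nested layout remains crossing-free with the left edge clear and the root at the top-left corner, is the delicate part of the proof.
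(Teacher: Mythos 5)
Your skeleton has the right ingredients (heavy path, a star-type lemma for the small hangers, recursion for the large ones, and balancing around $s\approx\sqrt A$), but the step you yourself flag as "the delicate part" is a genuine gap, and the fix you sketch is not the one that works. Placing the recursive boxes for the large off-spine subtrees \emph{side by side} with scaled width budgets $\Theta(A(m/M_i)^2)$ runs into exactly the problem you name: every recursive box needs width at least $\Omega(\log(\text{its size}))$ (indeed the theorem's own width bound has an additive $\log n$), and when there are many such subtrees at one spine vertex these additive terms sum to far more than $O(A)$. Moreover, for a subtree of size $m$ with $A(m/M_i)^2$ tiny, the recursive height bound degenerates to $\Theta(m)$, which is not $O((M_i/\sqrt A)\log^2 A)$ in the regime that matters (e.g.\ $A=\lceil\log n\rceil$, the choice used to derive the corollary). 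Your proposed repair --- grouping medium subtrees by size class and packing each class into a grid with a tuned number of columns --- is left entirely unverified, and it is not clear it can be made consistent across classes and recursion levels while preserving planarity, upwardness, and the clear left side needed to route the spine edges.

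The paper's proof avoids all of this with two structural decisions you did not make. First, it recurses at only \emph{one} node per level: the $A$-skewed centroid $v_k$, i.e.\ the deepest heavy-path node whose subtree still has size $>n-A$. By this choice the off-spine subtrees hanging above $v_k$ have total size at most $A$ and are simply drawn side by side at the top by the standard algorithm (width $O(A)$, height $O(\log A)$) --- no snaking or folding, which in your version also risks breaking upwardness. Second, the children of $v_k$ are split three ways: size $\le s$ goes to the augmented-star lemma (height $O((n'/A)s\log^3 s)$); size in $(s,A]$ is drawn \emph{non-recursively} by the standard algorithm, each costing only $O(\log A)$ height, and since there are at most $n''/s$ of them this totals $O((n''/s)\log A)$; size $>A$ is recursed on \emph{with the same parameter $A$}, and all of these boxes are stacked \emph{vertically}, so widths take a maximum rather than a sum and the width recurrence $W(n)\le\max\{O(A),\,W(n/2)+1,\,W(n-A)\}$ gives $O(A+\log n)$. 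Because every recursive call is on a subtree of size at least $A$, the recursion tree has $O(n/A)$ nodes, so the per-call $O(\log A)$ height overhead totals $O((n/A)\log A)$; balancing $(n/A)s\log^3 s$ against $(n/s)\log A$ with $s=\Theta(\sqrt A/\log A)$ gives the claimed height. You would need to replace your side-by-side, per-subtree-budget layout with this vertical stacking and the explicit medium-size tier before the argument closes.
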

\begin{proof}
We describe a recursive algorithm to draw $T$:
Let $s$ be a fixed parameter with $A\ge\log s$.
Let $v_0$ be the root of $T$, and
define $v_{i+1}$ to be the child of $v_i$ whose subtree is the largest (the resulting root-to-leaf path $v_0v_1v_2\cdots$ 
is called the \emph{heavy path} of $T$).
Let $k$ be the largest index such that the subtree at $v_k$
has size more than $n-A$
(we will call the node $v_k$ the \emph{$A$-skewed centroid}).
Then the total size of the subtrees at the siblings of $v_1,\ldots,v_k$
is at most $A$,
the subtree at $v_{k+1}$ has size at most $n-A$,
and the subtree at each sibling of $v_{k+1}$ has size at most $\min\{n-A,\,n/2\}$.

The drawing of $T$, depicted in Figure~\ref{fig_general}, is
constructed as follows (which includes multiple applications
of the standard algorithm in steps 1 and 3, one application
of the augmented-star algorithm in step~2, and recursive calls
in step~4):

\fig{fig_general}{\PAPER{0.8}\LIPICS{0.7}}{The general algorithm in Theorem~\ref{thm:general}.}

\begin{enumerate}
\item
Draw the subtrees at the siblings of $v_1,\ldots,v_k$
by the standard algorithm.
Stack these drawings horizontally. 
Since these subtrees have total size at most $A$,
the drawing so far has total width $O(A)$ and height $O(\log A)$.
\item
Draw the subtrees at the children of $v_k$ that have
\emph{size $\le s$},
together with the edges from $v_k$ to the roots of these subtrees, 
by the augmented-star algorithm in Lemma~\ref{lem:star} with parameter $\AAA=\up{A/\log s}$.  By reflection, make $v_k$ lie on the top-right corner of its corresponding bounding box.  Place the drawing
below the drawings from step~1.
This part has width $O(\AAA\log s) = O(A)$ and height $O((n'/\AAA)\cdot s\log^2s) = O((n'/A)\cdot s\log^3 s)$ where
 $n'$ is the total size of these subtrees.

(Note that if $n'\le A$, we can
just use the standard algorithm with width $O(A)$ and height $O(\log A)$ for this step.)
\item
Draw the subtrees at the children of $v_k$ that have
\emph{size $> s$ and $\le A$}, by the standard
algorithm.
By reflection, make the roots lie on the top-right corners of their respective bounding boxes.
Stack these drawings vertically, underneath the drawing from step~2.
This part has width $O(A)$ and height $O(\mbox{(number of these subtrees)} \cdot \log A)\le O((n''/s) \cdot \log A)$,
where $n''$ is the total size of these subtrees.
\item
Recursively draw the subtrees at the children of $v_k$ that have 
\emph{size $>A$}.
By reflection, make the roots lie on the top-right corners of their respective bounding boxes.
Stack these drawings vertically, underneath the drawings from step~3.
Put the drawing of the subtree at $v_{k+1}$ 
at the bottom. 
\end{enumerate}

The special case $k=1$ is similar, except that we place
$v_k$ on the left, and so do not reflect in steps 2--4.
The special case $k=0$ is also similar, but bypassing step~1.

The overall width satisfies the following recurrence
\PAPER{
\[ W(n)\ \le\ \max\{ O(A),\ W(n/2)+1,\ W(n-A) \},
\]
}\LIPICS{%
$W(n)\ \le\ \max\{ O(A),\ W(n/2)+1,\ W(n-A) \},$
}
which solves to $W(n)=O(A + \log n)$.

The overall height satisfies the following recurrence
\PAPER{
\[ H(n)\ \le\ \sum_{i=1}^m H(n_i) \:+\: c(\log A + (n'/A)s\log^3 s + (n''/s)\log A)
\]
}\LIPICS{
\[ H(n)\ \le\ \mbox{$\sum_{i=1}^m H(n_i) \:+\: c(\log A + (n'/A)s\log^3 s + (n''/s)\log A)$}\]
}
for some $n',n'',m,n_1,\ldots,n_m$ with $n'+n''+\sum_i n_i\le n$, $n_i\le n-A$,
and $n_i\ge A$, for some constant~$c$.\PAPER{\footnote{
Constants $c$ in different proofs may be different.
}}
It is straightforward to verify by induction%
\PAPER{%
\footnote{
Alternatively, one can see the solution directly without induction:
The contribution of the $(n'/A)s\log^3 s + (n''/s)\log A$ terms cleary sums to at most $(n/A)s\log^3 s + (n/s)\log A$.
The contribution of the first $\log A$ term sums to at most
$(2n/A-1)\log A$, because the number of nodes in the recursion
tree is at most $2n/A-1$.  This is because we can charge at least $A$
units to each leaf and each degree-1 node of the recursion tree
in such a way that the total number of charges
is at most $n$, implying that the number of leaves and degree-1
nodes is  at most $n/A$. The number of nodes of degree at least 2 is
at most the number of leaves minus 1.
}
}
%
that%
\PAPER{
\[ H(n)\ \le\ c((2n/A-1)\log A + (n/A)s\log^3 s + (n/s)\log A). \]
}\LIPICS{
$H(n)\ \le\ c((2n/A-1)\log A + (n/A)s\log^3 s + (n/s)\log A).$
}
(The constraint $n_i\le n-A$ is needed in the $m=1$ case.) %
%
%
%
%
Choosing $s=\Theta(\sqrt{A}/\log A)$ to balance the last two terms gives the height bound in the theorem. 
\end{proof}


Finally, choosing $A=\up{\log n}$ gives\PAPER{: 
\begin{corollary}
Every tree of size $n$ has a straight-line
upward drawing with area $O(n\sqrt{\log n}\log^2\log n)$.
\end{corollary}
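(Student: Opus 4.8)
The plan is to invoke Theorem~\ref{thm:general} with a single, carefully chosen value of the parameter $A$ and then simply multiply the resulting width and height bounds. Since Theorem~\ref{thm:general} produces, for every $n \ge A \ge 1$, a straight-line upward drawing of width $O(A + \log n)$ and height $O((n/\sqrt{A})\log^2 A)$, the area is $O((A + \log n)\cdot (n/\sqrt{A})\log^2 A)$, and we want to minimize this over admissible $A$.

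First I would observe that the width term $A + \log n$ is minimized (up to constants) by taking $A$ on the order of $\log n$, at which point the width is $\Theta(\log n)$; making $A$ smaller does not help the width (it stays $\Theta(\log n)$) but increases the height factor $1/\sqrt{A}$, while making $A$ larger only increases the width. So the natural choice is $A = \up{\log n}$. Then I would substitute: the width becomes $O(\log n + \log n) = O(\log n)$, and the height becomes $O((n/\sqrt{\up{\log n}})\,\log^2\up{\log n}) = O((n/\sqrt{\log n})\log^2\log n)$.

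Multiplying the two gives area $O(\log n)\cdot O((n/\sqrt{\log n})\log^2\log n) = O(n\sqrt{\log n}\log^2\log n)$, which is exactly the claimed bound. The only bookkeeping to check is that $A = \up{\log n}$ is a legal parameter choice, i.e.\ that $n \ge \up{\log n} \ge 1$, which holds for all $n \ge 1$; there is no genuine obstacle here, as the Corollary is a direct specialization of the Theorem. One could also note in passing that varying $A$ between $\Theta(\log n)$ and $n$ yields the full range of width--height tradeoffs advertised in the introduction, with $A = \sqrt{n}\,2^{\Theta(\sqrt{\log n\log\log n})}$ giving the near-$\sqrt{n}$-perimeter drawings, but for the Corollary as stated only the single substitution $A=\up{\log n}$ is needed.
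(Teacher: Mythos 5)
Your proof is correct and matches the paper exactly: the corollary is obtained by instantiating Theorem~\ref{thm:general} with $A=\up{\log n}$, giving width $O(\log n)$ and height $O((n/\sqrt{\log n})\log^2\log n)$, hence the stated area.
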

}\LIPICS{
area $O(n\sqrt{\log n}\log^2\log n)$.
}

\PAPER{
\noindent{\em Remark.} 
No attempt has been made to improve the minor $\log\log n$ factors.
}

\section{Straight-Line Drawings of Arbitrary Trees}
\label{sec:general2}

To obtain still better area bounds for straight-line non-upward
drawings of arbitrary trees, the idea is to bootstrap: we show how to use
a given general algorithm to obtain an improved augmented-star algorithm,
which in turn is used to obtain an improved general algorithm.
In order to bootstrap, we need to identify large
grid substructures inside each sector in the augmented-star algorithm.
This requires an interesting geometric observation about lattices, described in
the following subsection.

\subsection{An observation about lattices}

\newcommand{\ff}{f_0}
\newcommand{\gggg}{g_0}
\newcommand{\uu}{\textbf{u}}
\newcommand{\vv}{\textbf{v}}
\newcommand{\Z}{\mathbb{Z}}
\newcommand{\diam}{\textrm{diam}}

A \emph{two-dimensional lattice} is a set of the form
$\Lambda=\{i\uu+j\vv : i,j\in\Z\}$ for some vectors $\uu,\vv\in\R^2$.
The vector pair $\{\uu,\vv\}$ is called a \emph{basis} of $\Lambda$.

In this paper, we use the term
\emph{$a\times b$ affine grid\/} to refer to a set of the form
$\{i\uu+j\vv: i\in\{x_0+1,\ldots,x_0+a\},\,j\in\{y_0+1,\ldots,y_0+b\}\}$
for some vectors $\uu,\vv\in\R^2$ and some $x_0,y_0\in\R$.
In other words, it is a set that is equivalent to the
regular $a\times b$ grid $\{1,\ldots,a\}\times\{1,\ldots,b\}$ after
applying some affine transformation.

The following observation is the key (see Figure~\ref{fig_lattice}).
The author is not aware of any references of this specific statement
(but would not be surprised if this was known before).


\fig{fig_lattice}{\PAPER{0.7}\LIPICS{0.6}}{Observation~\ref{obs:lattice}:
A convex set that contains many lattice points
must contain a large affine grid in the lattice.}

\begin{obs}\label{obs:lattice}
If a convex set $S$ in the plane contains $n$ points from a lattice $\Lambda$,
then $S\cap\Lambda$ contains an $a\times b$ affine grid for some $a$ and
$b$ with $ab=\Omega(n)$. 
\end{obs}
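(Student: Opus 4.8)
The plan is to use Gauss lattice basis reduction to replace the given basis of $\Lambda$ by a \emph{reduced} basis $\{\uu,\vv\}$ in which $\uu$ is a shortest nonzero vector of $\Lambda$ and $\vv$ is a shortest vector not parallel to $\uu$; in particular the angle between $\uu$ and $\vv$ lies in $[60^\circ,120^\circ]$, so the fundamental parallelogram is ``fat.'' Writing each lattice point of $S$ in coordinates $(i,j)$ relative to this basis, I want to show that the image $S'$ of $S$ under the affine map sending $i\uu+j\vv\mapsto(i,j)$ contains an axis-aligned $a\times b$ integer box with $ab=\Omega(n)$. Since an affine image of a convex set is convex, $S'$ is a convex region of the plane containing $n$ integer points, and it suffices to prove the following purely combinatorial fact: \emph{any convex region containing $n$ integer points contains an axis-aligned $a\times b$ integer box with $ab=\Omega(n)$} --- wait, that is false in general (a long thin diagonal strip contains many integer points but no $2\times 2$ box), which is exactly why the reduction step matters. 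The point of using a reduced basis is that the covolume $|\det(\uu,\vv)|$ is $\Theta(|\uu|\,|\vv|)$, and this must be reconciled with the shape of $S$.

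So the key steps, in order, are: (1) compute a Gauss-reduced basis $\{\uu,\vv\}$ and record the angle bound; (2) pass to the affine coordinate system, reducing to a statement about a convex set $S'\subseteq\R^2$ with $|S'\cap\Z^2|=n$; (3) observe that, by step~(1), the ``aspect distortion'' of the coordinate change is bounded, so that a ball of radius $\rho$ around any point maps to a region comparable to a ball of radius $\Theta(\rho)$ --- more precisely, the ratio of the longer to the shorter side of the bounding box of $S'$ is controlled, not arbitrarily, by $|\vv|/|\uu|$; (4) bound the number of integer points of $S'$: if $W$ and $H$ are the width and height of the bounding box of $S'$, then $n=|S'\cap\Z^2|\le (W+1)(H+1)$ roughly, and on the other hand, because $S'$ is convex and ``not too skinny'' relative to $\Z^2$ (its area is $\Omega(WH)$ once $W,H$ are at least a constant, or one handles the degenerate near-$1$-dimensional case separately), $S'$ contains an axis-aligned box of dimensions $\Omega(W)\times\Omega(H)$; (5) conclude $ab=\Omega(WH)=\Omega(n)$. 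The degenerate cases where $\min\{W,H\}=O(1)$ must be treated by hand: then $S'$ is contained in $O(1)$ horizontal (or vertical) integer lines, hence $S'\cap\Z^2$ is itself within a constant factor of an $a\times 1$ affine grid, which we take directly.

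The main obstacle I expect is step~(4): turning ``convex with $n$ integer points'' into ``contains a large axis-aligned integer subbox.'' A clean way is to first bound $W$ and $H$ (the side lengths of the axis-aligned bounding box of $S'$) in terms of $n$: since $S'$ is convex and contains $n$ integer points, one gets $n=O(WH+W+H)$; combined with convexity one also gets, by slicing $S'$ with the $\Theta(H)$ integer horizontal lines that meet it, that some such line meets $S'$ in a segment of length $\Omega(n/H)$, hence containing $\Omega(n/H)$ integer points, and then using convexity again around that long chord to fatten it into a box of height $\Omega(H')$ for a suitable $H'$ --- the bookkeeping here, reconciling the ``long chord'' with a genuine full-dimensional subbox, is where care is needed, and it is essentially a discrete isoperimetric/convexity argument. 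A cheaper route that still suffices: take the widest integer horizontal chord, of integer-length $\ell_x=\Omega(n/H_0)$ where $H_0$ is the number of integer horizontal lines meeting $S'$; by convexity the chords in the $\Omega(\ell_x)$-wide central vertical band all have length $\Omega(\ell_x)$ over $\Omega(H_0)$ consecutive integer heights, yielding the desired $a\times b$ box with $ab=\Omega(\ell_x H_0)=\Omega(n)$. Either way, the arithmetic is routine once the reduced basis has tamed the geometry.
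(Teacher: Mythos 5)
There is a genuine gap, and it sits exactly at the point you flagged yourself. You reduce the basis of $\Lambda$ and pass to coordinates in which the lattice becomes $\mathbb{Z}^2$, and then you want to show that the image $S'$ (a convex set containing $n$ integer points) contains a large \emph{axis-aligned} integer box. But Gauss reduction normalizes only the lattice, not the shape of $S$ relative to it: the standard basis of $\mathbb{Z}^2$ is already reduced, so the thin diagonal strip $\{(x,y): 0\le x\le n,\ |y-x|\le 1/3\}$ is a counterexample that survives your steps (1)--(3) untouched. It is convex, contains the $n+1$ integer points $(i,i)$, its bounding box is roughly $n\times n$ (so your ``$\min\{W,H\}=O(1)$'' degenerate case does not catch it), yet it contains no $2\times2$ axis-aligned integer box and its area is $O(n)$, not $\Omega(WH)$. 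Your cheaper route fails on the same example: every horizontal integer chord has length $2/3$ and carries a single integer point, and the strip exits any fixed vertical band of width $2/3$ after $O(1)$ integer heights, so no stack of aligned chords accumulates. The observation itself remains true for this set only because the points $(i,i)$ form a $1\times(n+1)$ affine grid in the \emph{diagonal} direction---precisely the kind of configuration your axis-aligned reduction has already discarded.

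The missing idea is to normalize the shape of $S$ \emph{before} reducing the basis. The paper first applies an affine transformation (via the L\"owner--John ellipsoid) making $S$ fat, i.e., sandwiched between two disks $D^-\subset S\subset D^+$ of comparable diameter; this carries $\Lambda$ to some other lattice, and only then does one take a Gauss-reduced basis $\{\mathbf{u},\mathbf{v}\}$ of the \emph{new} lattice. Because $S$ is now round and $\angle(\mathbf{u},\mathbf{v})\in[60^\circ,120^\circ]$, $S$ is sandwiched between two rhombi with sides parallel to $\mathbf{u},\mathbf{v}$ and comparable side lengths $r^-=\Omega(r^+)$; the outer rhombus gives $n\le\up{r^+/\|\mathbf{u}\|}\cdot\up{r^+/\|\mathbf{v}\|}$, and the inner one exhibits a $\down{r^-/\|\mathbf{u}\|}\times\down{r^-/\|\mathbf{v}\|}$ affine grid inside $S\cap\Lambda$ of size $\Omega(n)$, with a separate easy case when $\|\mathbf{u}\|$ or $\|\mathbf{v}\|$ exceeds $r^-$, where one falls back to a single row, i.e., a $1\times\Omega(n)$ grid. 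Your instinct that ``the reduction step matters'' is correct, but reduction alone is not enough: the fattening step is what rules out the diagonal strip, and without it the combinatorial statement you reduce to is simply false.
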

\begin{proof}
First, apply an affine transformation to make $S$ \emph{fat}, i.e.,
$D^-\subset S\subset D^+$ for some disks $D^-$ and $D^+$ 
 with $\diam(D^-)= \Omega(\diam(D^+))$.
(This follows immediately from well-known properties
of the \emph{L\"owner--John ellipsoid}; or
see~\cite{AgHaVa,BarHar} for simple, direct algorithms.)

After the transformation, $\Lambda$
is still a lattice.  It is well known that there
exists a basis $\{\uu,\vv\}$ for $\Lambda$ 
satisfying
$60^\circ\le \angle(\uu,\vv)\le 120^\circ$.
(A {\em Gauss-reduced basis\/} satisfies this property;
for example, see \cite[Section 27.2]{VazBOOK}.)

Let $R^+$ be the smallest rhombus containing $D^+$, with sides parallel to $\uu$
and $\vv$.
Let $R^-$ be the largest rhombus $R^-$ contained in $D^-$, with
sides parallel to $\uu$ and $\vv$.
Then $R^+$ and $R^-$ have side lengths $r^+=O(\diam(D^+))$
and $r^-=\Omega(\diam(D^-))$ respectively,
since $\angle(\uu,\vv)$ is bounded away from $0^\circ$ or $180^\circ$.  It follows that $r^-=\Omega(r^+)$.

Now, $S\cap\Lambda\subset R^+\cap\Lambda$ is contained in an $\up{r^+/\|\uu\|}
\times \up{r^+/\|\vv\|}$ affine grid.
Thus, 
\PAPER{$$n\:\le\: \up{r^+/\|\uu\|}\cdot \up{r^+/\|\vv\|}.$$}%
\LIPICS{$n\:\le\: \up{r^+/\|\uu\|}\cdot \up{r^+/\|\vv\|}.$}

On the other hand,
$S\cap\Lambda\supset R^-\cap\Lambda$ contains an $\down{r^-/\|\uu\|}
\times \down{r^-/\|\vv\|}$ affine grid, with%
\PAPER{
$$\down{r^-/\|\uu\|}
\times \down{r^-/\|\vv\|}\:=\:\Omega(\up{r^+/\|\uu\|}\cdot \up{r^+/\|\vv\|})\:=\:\Omega(n)$$
}\LIPICS{
$\down{r^-/\|\uu\|}
\times \down{r^-/\|\vv\|}\:=\:\Omega(\up{r^+/\|\uu\|}\cdot \up{r^+/\|\vv\|})\:=\:\Omega(n)$
}%
points, assuming that $\|\uu\|,\|\vv\|\le r^-$.

This almost completes the proof.
It remains to address the special case when $\|\uu\|>r^-$
(the case $\|\vv\|>r^-$ is similar).  Here,
$S\cap\Lambda\subset R^+\cap\Lambda$ is contained in an
$O(1)\times \up{r^+/\|\vv\|}$  affine grid.  Some row of the grid
must contain $\Omega(n)$ points of $S\cap\Lambda$.
The row is a $1\times \Omega(n)$ affine grid.
\end{proof}

\subsection{Improved augmented-star algorithm}

\newcommand{\GEN}{{\cal G}_0}

We first show how to use a given general algorithm $\GEN$ to
obtain an improved algorithm for the augmented-star case:



\begin{lemma}\label{lem:star2}
Suppose we are given a \emph{general algorithm} $\GEN$ that takes as input
any $n\ge A\ge g_0(n)$ and any tree of size $n$, and outputs a straight-line
drawing of width at most $A$ and height at most
$(n/A)\ff(A)$,
where the root is drawn at the top left corner
of the bounding box.
Here,  $\ff$ and $g_0$ are some
increasing functions  satisfying $\ff(n)\ge g_0(n)$.

Then we can obtain an \emph{improved augmented-star algorithm} that
takes as input any $n\ge A\ge 1$ and
a tree of size $n$ such that the subtree at 
each child of the root has size at most $s$,
and outputs a straight-line drawing 
with width $O(A\log s)$ and height $O((n/A)\cdot \ff(s)\log s)$,
where the root is placed at the top left corner
of the bounding box,
and the left side of the box contains no other vertices.
\end{lemma}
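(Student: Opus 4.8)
The plan is to mimic the augmented-star algorithm of Lemma~\ref{lem:star}, but to replace its wasteful use of the universal-point-set Fact~\ref{fact:universal} --- which needs roughly $s\log s$ points to accommodate a subtree of size $\le s$ --- by a call to the given general algorithm $\GEN$, run on a suitable affine grid found inside each sector; this will cost only roughly $\ff(s)$ points per unit of subtree size. Concretely, set $t=\up{\log s}$ and $\ell=\Theta(\ff(s))$ (we may assume $\ff(s)\ge t$, else $\GEN$ is already essentially optimal and one replaces $\ff$ by $\max\{\ff,\log\}$). Take $B=\up{c\ell n/A}$ with $c$ large, let $P$ be the set of coprime pairs in $\{1,\ldots,A\}\times\{-B,\ldots,-\down{B/2}-1\}$, so $|P|\ge\ell n$ by Fact~\ref{fact:coprime}. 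Place the root $v_0$ at the origin and, exactly as in Lemma~\ref{lem:star}, form disjoint sectors $S_1,\ldots,S_d$ with apex $v_0$ so that $S_i$ contains exactly $\ell n_i$ points of $P$, which is possible since no line through the origin meets $P$ twice. Henceforth view $S_i$ as the (convex) intersection of the sector with the rectangle $\{1,\ldots,A\}\times\{-B,\ldots,-\down{B/2}-1\}$; it contains at least $\ell n_i$ points of the lattice $\Z^2$.

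For each $i$, apply Observation~\ref{obs:lattice} to $S_i$ and $\Z^2$ to obtain an $a_i\times b_i$ affine grid $\Gamma_i\subseteq S_i\cap\Z^2$ with $a_ib_i=\Omega(\ell n_i)$; say $a_i\le b_i$. In the \emph{fat} case $a_i\ge\gggg(n_i)$, we locate inside $\Gamma_i$ a sub-grid (still an affine grid) whose dimensions equal the bounding box $\GEN$ would produce for $T_i$ under a suitable width parameter $A_i\in[\gggg(n_i),n_i]$; an affine transformation $\phi$ carries this sub-grid to a regular grid, we run $\GEN$ there, and $\phi^{-1}$ pulls back a planar straight-line drawing of $T_i$ lying in $\Gamma_i\subseteq S_i$. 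After first reflecting $\GEN$'s output, the root of $T_i$ lands at the corner of the parallelogram spanned by the sub-grid that faces $v_0$, so the edge from $v_0$ to that root meets the rest only at its endpoint. In the \emph{thin} case $a_i<\gggg(n_i)$, we have $b_i=\Omega(\ell n_i/\gggg(n_i))=\Omega(n_i)$ (since $\gggg(n_i)\le\gggg(s)\le\ff(s)=\Theta(\ell)$), so some line of $\Gamma_i$ carries $\ge n_i$ lattice points; let $\LL$ be a sub-segment with exactly $n_i$ of them (reflect so $\LL$ has positive slope). The scaled copies $\LL,2\LL,\ldots,t\LL$ all lie in $S_i$ (a cone), are parallel and non-vertical, and each carries $n_i$ lattice points, so Fact~\ref{fact:pseudogrid} draws $T_i$ on them with the root on $\LL$ (the copy of largest $y$-intercept, as $\LL$ has positive slope and lies below the $x$-axis); being innermost, $\LL$ keeps the edge from $v_0$ crossing-free. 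Since the output need not be upward, the horizontal-separation hypothesis of Fact~\ref{fact:pseudogrid} is not needed. Finally, add the edges from $v_0$ to all the roots.

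Fat-case drawings lie in the original rectangle and thin-case drawings in $\{1,\ldots,tA\}\times\{-tB,\ldots,0\}$, so the whole drawing has width $O(tA)=O(A\log s)$ and height $O(tB)=O((n/A)\ell\log s)=O((n/A)\ff(s)\log s)$, with $v_0=(0,0)$ at the top-left corner and no other vertex on the left side --- as required. The main obstacle is the fitting step in the fat case: showing that $\Gamma_i$, whose aspect ratio we do not control, contains a sub-grid shaped like a legal $\GEN$-output box for $T_i$. This rests on two points: $\GEN$'s box for parameter $A_i$ has area $n_i\ff(A_i)\le n_i\ff(s)=O(a_ib_i)$, and $\gggg(n)\le n$ (forced by the input constraint of $\GEN$), so one takes $A_i=a_i$ when $a_i\le n_i$ and $A_i$ near $n_i$ when $a_i$ is larger, checking in each subcase that the induced height is $O(b_i)$ --- here it is convenient (and harmless for the paper's applications, or else small subtrees are drawn trivially) to assume $\ff(n)=O(n)$ so that $\GEN$'s box is never taller than wide beyond the slack in $a_ib_i$. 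A secondary, purely bookkeeping obstacle is tracking which of the four corners of the affine parallelogram the root occupies after $\phi^{-1}$, and orienting sectors and reflections so that every apex-to-root edge is crossing-free, exactly as in the proofs of Fact~\ref{fact:universal} and Lemma~\ref{lem:star}.
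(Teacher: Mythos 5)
Your construction is, in all essentials, the paper's own proof: the same coprime point set $P$ of size $\Omega(AB)\ge\ell n$ with $\ell=\Theta(\ff(s))$, the same partition into sectors receiving $\ell n_i$ points each, the same invocation of Observation~\ref{obs:lattice} to extract an $a_i\times b_i$ affine grid with $a_ib_i=\Omega(\ff(s)\,n_i)$ in each sector, the same two-way split between running $\GEN$ on that grid through an affine change of coordinates and falling back to $n_i$ collinear lattice points drawn on the scaled segments $\LL,2\LL,\ldots,t\LL$ via Fact~\ref{fact:pseudogrid}, and the same $O(tA)\times O(tB)$ accounting. The one place where you deviate is also where the argument has a genuine gap: the subcase $a_i>n_i$ (hence $b_i\ge a_i>n_i$ under your normalization). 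You keep this in the ``fat'' branch and run $\GEN$ with $A_i$ near $n_i$, which produces a box of height about $\ff(n_i)$; to fit it inside the grid you are forced to assume $\ff(n)=O(n)$. That hypothesis is not granted by the lemma, and it fails for the very first step of the paper's bootstrap, where $\ff(A)=C_1A\log A$ (the standard algorithm with axes swapped). Concretely, take $n_i=s$: the sector yields $a_ib_i=\Omega(\ff(s)\,s)=\Omega(s^2\log s)$, and nothing prevents $a_i=b_i=\Theta(s\sqrt{\log s})$; then $a_i>n_i$, but your box of height $\Theta(\ff(s))=\Theta(s\log s)$ does not fit in $b_i=\Theta(s\sqrt{\log s})$. (A cleverer choice of $A_i$ happens to rescue this particular $\ff$, but you neither make that choice nor show that one always exists for a general increasing $\ff$.)

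The repair is exactly what the paper does, and it is already available inside your own proof: reserve the $\GEN$ route for $g_0(n_i)\le a_i\le n_i$, where $A_i=a_i$ works outright because $b_i\ge(n_i/a_i)\ff(s)\ge(n_i/A_i)\ff(A_i)$ once the constant $c$ in $\ell=c\ff(s)$ is large enough; and when $a_i>n_i$, observe that a single row of the affine grid already supplies more than $n_i$ collinear lattice points inside $S_i$, so this subcase can be sent to your ``thin''/collinear branch verbatim, with no assumption on $\ff$ whatsoever. With that rerouting your proof coincides with the paper's. A smaller point you should also nail down (the paper at least flags it): the line $L$ carrying the $n_i$ collinear points must not pass through the origin, since otherwise the scaled copies $\alpha\LL$ are collinear rather than parallel and the construction of Fact~\ref{fact:pseudogrid} breaks down.
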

\begin{proof}
\newcommand{\PPP}{{\cal P}}
Let $\ell=c\ff(s)$ for some constant $c$.
Let $B=\up{c\ell n/A}$.
Let $P=\{(x,y)\in\{1,\ldots,A\}\times\{-B,\ldots,-1\}: \mbox{$x$ and $y$ are relatively prime}\}$.
By Fact~\ref{fact:coprime}, $|P|=\Omega(AB)$,
and so $|P|\ge\ell n$ by making $c$ sufficiently large.

Let $n_1,\ldots,n_d$ be the sizes of the subtrees $T_1,\ldots,T_d$
at the children of the root $v_0$,
with $\sum_{i=1}^d n_i = n-1$ and $n_i\le s$ for each $i$.
Place $v_0$ at the origin.  
Form $d$ disjoint sectors, where the $i$-th sector $S_i$
contains exactly $\ell n_i $ points of $P$.  This is possible, since
any line through the origin contains at most one point of $P$
and $\sum_{i=1}^d  \ell n_i < \ell n\le |P|$.

Take a fixed $i$.  Applying Observation~\ref{obs:lattice}
to the convex set $S_i\cap ((0,A]\times [-B,0))$, we see that
$S_i\cap(\{1,\ldots,A\}\times \{-B,\ldots,-1\})$ must contain
an $a\times b$ affine grid for some $a$ and $b$
with $ab=\Omega(\ell n_i)$.  
Note that 
$b\ge (n_i/a)\ff(s)$ by making $c$ sufficiently large.
Consider two cases:
\begin{itemize}
\item {\sc Case 1:} $g_0(n_i)\le a\le n_i$. 
Here, we can draw $T_i$ in the $a\times b$ affine
grid by the given algorithm~$\GEN$, after applying
an affine transformation to convert to a standard integer $a\times b$ grid.
Note that planarity and straightness are preserved under the transformation (but
not upwardness).
The root of $T_i$ can be placed at the highest corner
of the grid. 
\item {\sc Case 2:} $a>n_i$ or $a< g_0(n_i)$.
Note that in the latter subcase, $b\ge (n_i/a)\ff(s) \ge (n_i/a)g_0(n_i)\ge n_i$.
In either subcase,
$S_i$ contains $n_i$ points of $P$
on a common line $L$.  (Note that $L$ does not pass through the origin,
by definition of $P$.)  
Let $t=\up{\log s}$ and $\LL=L\cap S_i$.
Let $G$ be the $t$ line segments $\LL,2\LL,\ldots,t\LL$.  Then each of the $t=\up{\log s}$
segments of $G$ contain $n_i$ integer points inside $S_i$.
Thus, we can
draw $T_i$ using the integer points on $G$ by
Fact~\ref{fact:pseudogrid}.
The root is placed on the highest segment of $G$.
\end{itemize}
Lastly, draw the edges from $v_0$ to the roots of the $T_i$'s.
%
The total width is $O(tA)=O(A\log s)$ and height is $O(tB)=O((n/A)\cdot \ff(s)\log s)$.
\end{proof}

\subsection{Improved general algorithm}

Using the improved augmented-star algorithm, we can then obtain
an improved general algorithm,
by following the same approach as in the proof of Theorem~\ref{thm:general}, except with Lemma~\ref{lem:star}
replaced by the improved Lemma~\ref{lem:star2} in step~2.
The same analysis shows the following:

\begin{theorem}\label{thm:general2}
Suppose we are given a \emph{general algorithm} $\GEN$ that takes as input
any $n\ge A\ge g_0(n)$ and any tree of size $n$, and outputs a straight-line
drawing of width at most $A$ and height at most
$(n/A)\ff(A)$,
where the root is drawn at the top left corner
of the bounding box.
Here,  $\ff$ and $g_0$ are some
increasing functions  satisfying $\ff(n)\ge g_0(n)$.

Then we can obtain an \emph{improved general algorithm} that
takes as input any $n\ge A\ge\log s$ and any tree of size $n$,
and outputs a straight-line upward
drawing with width $O(A+\log n)$ and
height $O((n/A)\log A + (n/A)\ff(s)\log^2s + (n/s)\log A)$, where
the root is placed at the top left corner of
the bounding box.
\end{theorem}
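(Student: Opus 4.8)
The plan is to reuse the recursive construction in the proof of Theorem~\ref{thm:general} essentially verbatim, changing only step~2 so that it calls the improved augmented-star algorithm of Lemma~\ref{lem:star2} — with the given algorithm $\GEN$ supplied as its subroutine — in place of Lemma~\ref{lem:star}. Concretely, I would again take the heavy path $v_0v_1v_2\cdots$ and its $A$-skewed centroid $v_k$; in step~1 draw the subtrees at the siblings of $v_1,\ldots,v_k$ by the standard algorithm, stacked horizontally (width $O(A)$, height $O(\log A)$); in step~2 draw the subtrees of size $\le s$ at the children of $v_k$ by Lemma~\ref{lem:star2} with parameter $\AAA=\up{A/\log s}$, so that this part has width $O(\AAA\log s)=O(A)$ and height $O((n'/\AAA)\ff(s)\log s)=O((n'/A)\ff(s)\log^2 s)$, where $n'$ is their total size (falling back to the standard algorithm when $n'$ is too small, exactly as in Theorem~\ref{thm:general}); in step~3 draw the subtrees of size in $(s,A]$ by the standard algorithm stacked vertically (width $O(A)$, height $O((n''/s)\log A)$); and in step~4 recurse on the subtrees of size $>A$, stacked vertically with the subtree at $v_{k+1}$ placed last. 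The boundary cases $k=0$ and $k=1$ are treated exactly as in Theorem~\ref{thm:general}.

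For the width, the recurrence is literally unchanged,
\[
 W(n)\ \le\ \max\{\,O(A),\ W(n/2)+1,\ W(n-A)\,\},
\]
so $W(n)=O(A+\log n)$. For the height, the only term that changes is the step~2 contribution, which drops from $(n'/A)\,s\log^3 s$ to $(n'/A)\,\ff(s)\log^2 s$; hence
\[
 H(n)\ \le\ \sum_i H(n_i)\:+\:c\bigl(\log A + (n'/A)\ff(s)\log^2 s + (n''/s)\log A\bigr),
\]
where $n'+n''+\sum_i n_i\le n$, each $n_i\ge A$, and each $n_i\le n-A$. Running the same charging argument as in Theorem~\ref{thm:general} — the recursion tree has at most $2n/A-1$ nodes, and the two ``sum'' terms telescope to at most $(n/A)\ff(s)\log^2 s+(n/s)\log A$ — this solves to
\[
 H(n)\ \le\ c\bigl((2n/A-1)\log A + (n/A)\ff(s)\log^2 s + (n/s)\log A\bigr),
\]
which is the claimed height bound.

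The one genuinely new point to verify — and the part I expect to need the most care, modest though it is — is that every subproblem fed to $\GEN$ satisfies its precondition $n\ge A\ge g_0(n)$. But $\GEN$ is invoked only inside Lemma~\ref{lem:star2}, and only in its Case~1, whose hypothesis $g_0(n_i)\le a\le n_i$ is precisely that precondition for the $a\times b$ affine grid into which $T_i$ is drawn; Case~2 (scaling a line through the origin and applying Fact~\ref{fact:pseudogrid}) absorbs the remaining possibilities, so nothing extra is required. Beyond this, one re-checks only the same bookkeeping as in Theorem~\ref{thm:general}: that $n_i\ge A\ge\log s$ keeps each recursive call legal, that $n_i\le n-A$ makes the height induction (in particular the $m=1$ case) go through, and that the standard-algorithm fallback in step~2 preserves the stated bounds. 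Upwardness of the output is inherited from Theorem~\ref{thm:general} for all ingredients except the affine-grid drawing produced via $\GEN$ in step~2, which is the one piece not guaranteed upward but is immaterial for the bootstrapping use of this theorem in Section~\ref{sec:general2}.
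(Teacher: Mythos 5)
Your proposal is correct and matches the paper's proof, which likewise just reruns the construction of Theorem~\ref{thm:general} with Lemma~\ref{lem:star} replaced by Lemma~\ref{lem:star2} in step~2 and observes that the same width and height analysis goes through. Your closing remark that the output is no longer upward (because of the affine transformation in Case~1 of Lemma~\ref{lem:star2}) correctly flags what appears to be a leftover word in the theorem statement, and is consistent with how the result is used later in Section~\ref{sec:general2}.
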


Assume inductively that there is a general algorithm $\GEN$ satisfying the assumption of the above theorem with
$\ff(A)=C_j A^{1/j}\log^j A$ and $g_0(n)=c_0\log n$
for some $C_j$ and $c_0$.  For $j=1$, this follows from the
standard algorithm, which has logarithmic width and linear height after swapping $x$ and $y$, with $C_1,c_0=O(1)$.

Choosing $s=\up{A^{j/(j+1)}/\log^j A}$ to balance the last two terms
in the above theorem
gives a width bound of $O(A+\log n)$
and height bound of 
$$O((n/A)\log A + (n/A)C_j s^{1/j}\log^{j+2}s + (n/s)\log A)\ =\ O(C_j (n/A) A^{1/(j+1)}\log^{j+1}A).$$
By setting $\AAA = c_0A$ and $C_{j+1}=O(1)\cdot C_j$, with
a sufficiently large absolute constant $c_0$,
the width is at most $\AAA$ and
the height is at most $C_{j+1}(n/\AAA)\AAA^{1/(j+1)}\log^{j+1}\AAA$
for any $n\ge \AAA\ge c_0\log n$.
We have thus obtained a new general algorithm with
$\ff(\AAA)=C_{j+1}\AAA^{1/(j+1)}\log^{j+1}\AAA$
and $g_0(n)=c_0\log n$.


Note that $C_j=2^{O(j)}$.
For the best bound, we choose
a nonconstant $j=\Theta(\sqrt{\log A/\log\log A})$ so that
$\ff(A)=2^{O(j)}A^{1/j}\log^j A = 2^{O((\log A)/j + j\log\log A)}=
2^{O(\sqrt{\log A\log\log A})}$, yielding:

\begin{corollary}
For any given $n\ge A\ge \log n$, every tree of size $n$ has a straight-line
drawing with width $O(A)$ and
height $(n/A) 2^{O(\sqrt{\log A\log\log A})}$.
\end{corollary}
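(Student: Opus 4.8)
The plan is to prove the corollary by a bootstrapping induction on a parameter $j$: at each round we feed the current ``general algorithm'' into Theorem~\ref{thm:general2} (which internally invokes the improved augmented-star subroutine of Lemma~\ref{lem:star2}, and hence the lattice Observation~\ref{obs:lattice}) to obtain a strictly better general algorithm, and at the end we optimize the number of rounds.

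First I would fix the invariant to be maintained: for each $j\ge 1$ there is a general algorithm that, given any tree of size $n$ and any $A$ with $n\ge A\ge c_0\log n$, outputs a straight-line drawing of width at most $A$ and height at most $(n/A)\ff(A)$ with the root at the top-left corner, where $\ff(A)=C_j A^{1/j}\log^j A$, $g_0(n)=c_0\log n$, and $c_0$ is an absolute constant. This is precisely the hypothesis of Theorem~\ref{thm:general2}, and one checks $\ff\ge g_0$ in the stated range since $\ff(A)\ge A\ge c_0\log n\ge c_0\log A$, so the theorem applies. The base case $j=1$ is the standard algorithm of Figure~\ref{fig_standard} with the roles of $x$ and $y$ exchanged: it has width $O(\log n)\le A$ and height $O(n)$, which is at most $(n/A)\ff(A)$ once $C_1$ is a large enough constant, so $C_1,c_0=O(1)$ work.

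For the inductive step I would apply Theorem~\ref{thm:general2} with $\GEN$ equal to the round-$j$ algorithm, obtaining a drawing of width $O(A+\log n)$ and height $O((n/A)\log A+(n/A)\ff(s)\log^2 s+(n/s)\log A)$. Choosing $s=\up{A^{j/(j+1)}/\log^j A}$ balances the last two terms and makes the height $O(C_j(n/A)A^{1/(j+1)}\log^{j+1}A)$; rescaling $A$ up by the constant $c_0$ absorbs the additive $O(\log n)$ into the width (so the width is genuinely $\le\AAA:=c_0A$) and the hidden constant into $C_{j+1}=O(1)\cdot C_j$. This produces the round-$(j+1)$ algorithm with $\ff(\AAA)=C_{j+1}\AAA^{1/(j+1)}\log^{j+1}\AAA$ and, crucially, shows $C_j=2^{O(j)}$ because each round multiplies the constant by a fixed factor. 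Finally, writing $\ff(A)=2^{O(j)}\cdot 2^{(\log A)/j}\cdot 2^{j\log\log A}=2^{O((\log A)/j+j\log\log A)}$ and choosing the nonconstant $j=\Theta(\sqrt{\log A/\log\log A})$ to balance the two exponent terms gives $\ff(A)=2^{O(\sqrt{\log A\log\log A})}$; plugging this into the round-$j$ general algorithm yields width $O(A)$ and height $(n/A)2^{O(\sqrt{\log A\log\log A})}$. The gap between the hypothesis $A\ge c_0\log n$ and the claimed range $A\ge\log n$ is harmless: a too-small $A$ can be replaced by $c_0\log n=\Theta(A)$, which alters the exponent only by a constant factor since $\log(c_0\log n)=\Theta(\log A)$.

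I expect the main obstacle to be the bookkeeping of the induction rather than any isolated hard step. One must verify at every round that the hypotheses of Theorem~\ref{thm:general2} are genuinely preserved --- $\ff$ and $g_0$ increasing, $\ff\ge g_0$, the exact width-$\le A$ normalization, and the admissible range $A\ge g_0(n)$ --- under the rescaling $A\mapsto c_0A$, and that the accumulated constant stays $2^{O(j)}$ so that letting $j$ grow with $A$ is legitimate. The genuinely new geometric content, that a convex set containing many lattice points contains a large affine grid, is already isolated in Observation~\ref{obs:lattice} and consumed inside Lemma~\ref{lem:star2} and Theorem~\ref{thm:general2}, so here it is reused as a black box.
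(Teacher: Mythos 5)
Your proposal follows the paper's own derivation essentially verbatim: the same bootstrapping via Theorem~\ref{thm:general2} starting from the standard algorithm ($j=1$), the same choice $s=\up{A^{j/(j+1)}/\log^j A}$, the same rescaling $\AAA=c_0A$ giving $C_j=2^{O(j)}$, and the same final choice $j=\Theta(\sqrt{\log A/\log\log A})$. One tiny slip: your justification of $\ff\ge g_0$ via ``$\ff(A)\ge A$'' is false for $j\ge 2$ and large $A$ (since $A^{1/j}\log^j A<A$ there), but the needed inequality $C_jA^{1/j}\log^jA\ge c_0\log A$ holds trivially anyway once $C_j\ge c_0$, so nothing breaks.
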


Finally, choosing $A=\up{\log n}$ gives\PAPER{ :
\begin{corollary}
Every tree of size $n$ has a straight-line
drawing with area 
$n2^{O(\sqrt{\log\log n\log\log\log n})}$.
\end{corollary}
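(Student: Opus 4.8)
The plan is to obtain this bound as an immediate corollary of the preceding width--height tradeoff statement, which asserts that for any $n\ge A\ge\log n$ every tree of size $n$ has a straight-line drawing with width $O(A)$ and height $(n/A)\,2^{O(\sqrt{\log A\log\log A})}$. First I would set $A=\up{\log n}$. This is a legal choice for all sufficiently large $n$ (so that $\log n\le A\le n$), which is all that is needed since the claimed area bound is asymptotic.

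With $A=\up{\log n}$, the drawing has width $O(\log n)$ and height $(n/\up{\log n})\cdot 2^{O(\sqrt{\log A\log\log A})}$. Next I would simplify the exponent: $\log A=\log\up{\log n}=\Theta(\log\log n)$, hence $\log\log A=\Theta(\log\log\log n)$, so $\sqrt{\log A\log\log A}=O(\sqrt{\log\log n\log\log\log n})$, giving height $(n/\log n)\cdot 2^{O(\sqrt{\log\log n\log\log\log n})}$. Multiplying width by height, the two $\log n$ factors cancel and the area is $n\cdot 2^{O(\sqrt{\log\log n\log\log\log n})}$, as claimed.

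I do not expect any real obstacle at this final step --- essentially all of the work lies upstream, in the bootstrapped general algorithm (the inductive family of algorithms with $\ff(A)=C_jA^{1/j}\log^jA$, $C_j=2^{O(j)}$, balanced at the nonconstant choice $j=\Theta(\sqrt{\log A/\log\log A})$) together with Observation~\ref{obs:lattice}, which is what lets the augmented-star step recurse inside narrow sectors after an affine transformation to a regular grid. The only decision to make here is the value of $A$: the height decreases in $A$ only through the slowly growing factor $2^{O(\sqrt{\log A\log\log A})}$ while the width grows linearly in $A$, so one wants $A$ as small as the constraint $A\ge\log n$ allows, forcing $A=\up{\log n}$ (and matching the familiar $\Omega(\log n)$ barrier on the height of narrow tree drawings). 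The one thing worth double-checking is that the additive ``$+\log n$'' appearing in the width bound of Theorem~\ref{thm:general2} and the $\up{\cdot}$ rounding each contribute only a constant factor to the area in this regime, which is immediate since both are $O(\log n)=O(A)$.
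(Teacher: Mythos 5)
Your proposal is correct and is exactly the paper's own derivation: the paper obtains this corollary by substituting $A=\up{\log n}$ into the preceding width--height tradeoff (width $O(A)$, height $(n/A)2^{O(\sqrt{\log A\log\log A})}$) and simplifying the exponent. Your additional sanity checks on the rounding and the additive $\log n$ term are fine but not needed beyond what the paper states.
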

}\LIPICS{
area $n2^{O(\sqrt{\log\log n\log\log\log n})}$.
\par\medskip
}

\noindent {\em Remarks.}
It is straightforward to implement the algorithms in Section~\ref{sec:general} and this section in polynomial time.
\PAPER{\par}%
One open question is whether the improved bound holds for
upward drawings.  
Another open question is whether further improvements are possible
if we allow poly-line drawings.

\section{Straight-Line Orthogonal Drawings of Binary Trees}
\label{sec:logstar}

In this section, we consider binary trees and describe algorithms to
produce straight-line orthogonal (non-upward) drawings.
We improve previous algorithms with $O(n\log\log n)$ area
by Shin, Kim, and Chwa~\cite{ShKiCh} and
Chan et al.~\cite{GD96}.  The idea is (again) to bootstrap.

Given a binary tree $T$ and two distinct vertices $u$ and $v$,
such that $v$ is a descendant of $u$ but not an immediate child of $v$,
the \emph{chain\/} from $u$ to $v$ is defined to be the
subtree at $u$ minus the subtree at $v$.  (To explain the terminology,
note that the chain consists of the path from $u$ to the parent of $v$,
together with a sequence of subtrees attached to the nodes of this path.)  We show how to use a given algorithm for drawing chains to obtain a general algorithm for drawing trees, which together with the given chain algorithm is used to obtain an improved chain algorithm.

\subsection{The general algorithm}

\newcommand{\CHAIN}{{\cal C}}

Given a chain algorithm $\CHAIN_0$,
we can naively use it to
draw the entire tree, since a tree can be viewed as a chain
from the root to an artificially created leaf.
We first show how to use a given chain algorithm $\CHAIN_0$
to obtain a general algorithm that achieves \emph{arbitrary
width--height tradeoffs}.  This is done by adapting
previous algorithms~\cite{ShKiCh,GD96}.

\begin{lemma}\label{lem:tradeoff}
Suppose we are given a \emph{chain algorithm} $\CHAIN_0$ that
takes as input any binary tree and a chain from $v_0$ to 
$v_k$ where the size of the chain is $n$, 
and outputs a straight-line
orthogonal drawing of the chain with width at most $W_0(n)$ and height
at most $H_0(n)$,
where $v_0$ is placed at the top
left corner of the bounding box, and the parent of $v_k$ is placed at the
bottom left corner of the box. 
Here, $W_0(n)$ and $H_0(n)$ are
increasing functions. 

Then we can obtain a \emph{general algorithm} that takes as input 
$n\ge A\ge 1$ and any binary tree $T$ of size $n$,
and outputs
a straight-line orthogonal drawing with
width $O(W_0(A)+\log n)$ and height $O((n/A)H_0(A))$, where 
the root is placed at the top left corner of the bounding box.
\end{lemma}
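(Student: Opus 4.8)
The plan is to adapt the skewed-centroid recursion behind the earlier $O(n\log\log n)$ algorithms~\cite{ShKiCh,GD96}. Given a binary tree $T$ of size $n$ with root $v_0$, I would follow the heavy path $v_0v_1v_2\cdots$ and let $v_k$ be the $A$-skewed centroid, i.e.\ the last node whose subtree has size more than $n-A$, exactly as in Theorem~\ref{thm:general}. Then the chain from $v_0$ to $v_k$ has size less than $A$, the heavy child $v_{k+1}$ of $v_k$ has subtree size at most $n-A$, and the other child $w$ of $v_k$ (if it exists) has subtree size at most $n/2$. The general algorithm draws the chain from $v_0$ to $v_k$ once, using the given chain algorithm $\CHAIN_0$, and then recurses on the subtrees at $v_{k+1}$ and $w$ --- falling back to $\CHAIN_0$ (viewing a subtree as a chain from its root to an artificially attached leaf) whenever a subtree has size at most $A$, and handling the degenerate cases $k\le 1$ and $n=O(1)$ directly as in Theorem~\ref{thm:general}. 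Since any single recursed subtree has size at most $n-A$, the recursion tree has $O(n/A)$ nodes, by the same charging argument as in the proof of Theorem~\ref{thm:general}.

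The heart of the proof is the orthogonal layout, which has to be set up so that the heavy branch accumulates no horizontal offset. I would place the $\CHAIN_0$-drawing $B_0$ of the chain at the top left, so that $v_0$ is its top-left corner and the parent of $v_k$ is its bottom-left corner. Because the parent of $v_k$ sits at the bottom-left corner, every chain edge leaving it goes up or to the right; so its downward direction is free, and I put $v_k$ immediately below $B_0$ in column~$1$, joined to its parent by a unit vertical edge. I then recursively draw $T_w$ into a box $B_2$ with $w$ at its top-left corner and place $B_2$ directly to the right of $v_k$ (in columns $2,3,\dots$, on the rows just below $B_0$), joined to $v_k$ by a unit horizontal edge; and I recursively draw $T_{v_{k+1}}$ into a box $B_1$ with $v_{k+1}$ at its top-left corner and place $B_1$ flush against column~$1$, below $B_2$. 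The edge $v_kv_{k+1}$ then runs straight down column~$1$, which is empty between $v_k$ and $v_{k+1}$ precisely because $B_2$ occupies only columns~$\ge 2$; one checks that $B_0$, $B_2$, $B_1$ occupy three disjoint horizontal strips (in that top-to-bottom order) and that the (at most three) edges at $v_k$ use its up, right, and down directions without crossings. The point is that $B_1$ (the heavy branch) is offset by $0$ columns and $B_2$ (the light branch) by just~$1$.

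With this layout the width obeys $W(n)\le\max\{W_0(A),\,W(\up{n/2})+1,\,W(n-A)\}$, which solves to $W(n)\le W_0(A)+\up{\log_2 n}=O(W_0(A)+\log n)$ by a routine induction (the ``$+1$'' fires only on the size-halving branch). The whole drawing is just a vertical stack of $\CHAIN_0$-drawings --- each drawing a chain or a subtree of size at most $A$, hence of height at most $H_0(A)$ --- together with one extra row per recursion node for the $v_k$'s; so its height is $O(H_0(A))$ times the number of such pieces, which is $O(n/A)$ by the recursion-tree bound above, giving $H(n)=O((n/A)H_0(A))$, and the root ends up at the top-left corner by construction. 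The step I expect to be the main obstacle is finding this layout: one has to hit on the exact arrangement --- light subtree offset one column to the right, heavy subtree flush left, and the $v_k$-to-$v_{k+1}$ edge threaded down the now-unobstructed leftmost column --- that simultaneously keeps the drawing planar and orthogonal and collapses the width recurrence to $O(W_0(A)+\log n)$ instead of letting it blow up by a factor of $n/A$; once the layout is fixed, the recurrences are mechanical given the $O(n/A)$-node bound already available from the proof of Theorem~\ref{thm:general}.
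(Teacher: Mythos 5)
Your proposal is correct and follows essentially the same route as the paper: decompose along the heavy path at the $A$-skewed centroid $v_k$, draw the chain from $v_0$ to $v_k$ once with $\CHAIN_0$, stack the light child's drawing offset one column to the right and the heavy child's drawing flush left below it (threading the $v_kv_{k+1}$ edge down the first column), and solve the same width recurrence $W(n)\le\max\{O(W_0(A)),\,W(n/2)+1,\,W(n-A)\}$ together with the $O(n/A)$-piece height bound. The explicit column-offset layout you describe is exactly what the paper's Figure~\ref{fig_tradeoff} depicts, so there is nothing substantively different to compare.
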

\begin{proof}
We describe a recursive algorithm to draw $T$:
Let $v_0v_1v_2\cdots$ be the heavy path,
and $v_k$ be the $A$-skewed centroid, as in the
proof of Theorem~\ref{thm:general}.
Then the chain from $v_0$ to $v_k$ has size at most $A$,
 the subtree at $v_{k+1}$ has size at most $n-A$,
and the subtree at the sibling of $v_{k+1}$ has size at most
$\min\{n-A,n/2\}$.

\fig{fig_tradeoff}{\PAPER{0.8}\LIPICS{0.7}}{The general algorithm in Lemma~\ref{lem:tradeoff} for orthogonal drawings.}

The drawing of $T$, depicted in Figure~\ref{fig_tradeoff}, is
constructed as follows:

\begin{enumerate}
\item 
Draw the chain from $v_0$ to $v_k$ by the given algorithm $\CHAIN_0$, with width at most $W_0(A)$ and height at most
$H_0(A)$.  
\item Recursively draw the subtrees at the two children
of $v_k$.  
Stack the two drawings vertically, underneath the drawing from step~1.
Put the drawing of the subtree at $v_{k+1}$ at the bottom.
\PAPER{\par}
(Note that if any of these subtrees has size at most $A$, we can just use
algorithm $\CHAIN_0$ with width at most $W_0(A)$ and height
at most $H_0(A)$.)
\end{enumerate}

The special case $k=1$ is similar, except that in step~1
we can just apply algorithm $\CHAIN_0$ to draw the subtree at
the sibling of $v_1$, and connect $v_0$ to $v_k$ directly.  The special case $k=0$ is also similar, but bypassing step~1.

The overall width satisfies the recurrence%
\PAPER{
\[ W(n)\ \le\ \max\{ O(W_0(A)),\ W(n/2)+1,\ W(n-A) \},
\]
}\LIPICS{
$W(n)\ \le\ \max\{ O(W_0(A)),\ W(n/2)+1,\ W(n-A) \},$%
}
which solves to $W(n)=O(W_0(A) + \log n)$.

The overall height satisfies the recurrence%
\PAPER{
\[ H(n)\ \le\ \sum_{i=1}^m H(n_i) + cH_0(A)
\]
}\LIPICS{
$H(n)\ \le\ \sum_{i=1}^m H(n_i) + cH_0(A)$%
}
for some $m,n_1,\ldots,n_m$ with $m\le 2$, $\sum_i n_i\le n$, $n_i\le n-A$,
and $n_i\ge A$, 
for some constant~$c$.
The recurrence solves to $H(n)\le\ c(2n/A-1)H_0(A)$
(similarly to the proof of Theorem~\ref{thm:general}). 
\end{proof}

\subsection{The improved chain algorithm}

\fig{fig_chain}{\PAPER{0.8}\LIPICS{0.7}}{The improved chain algorithm 
in Theorem~\ref{thm:chain} for orthogonal drawings.}

Using both the general algorithm from Lemma~\ref{lem:tradeoff}
and the given chain algorithm $\CHAIN_0$, we describe an improved chain algorithm:

\begin{theorem}\label{thm:chain}
Suppose we are given a \emph{chain algorithm} $\CHAIN_0$ that
takes as input any binary tree and a chain from $v_0$ to 
$v_k$ where the size of the chain is $n$, 
and outputs a straight-line
orthogonal drawing of the chain with width at most $W_0(n)$ and height
at most $H_0(n)$,
where $v_0$ is placed at the top
left corner of the bounding box, and the parent of $v_k$ is placed at the
bottom left corner of the box. 
Here, $W_0(n)$ and $H_0(n)$ are
increasing functions. 

Then we can obtain an \emph{improved chain algorithm} that takes
as input any $n\ge A\ge 1$ and any binary tree and a chain from $v_0$ to $v_k$ where
the size of the chain is $n$,
and outputs a straight-line
orthogonal drawing of the chain with width 
$O((n/A)H_0(A))$ and height
$O(W_0(A)+\log n)$,
where $v_0$ is placed at the top
left corner of the bounding box, and the parent of $v_k$ is placed
at the bottom left corner.
\end{theorem}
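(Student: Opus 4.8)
The plan is to adapt the recursive strategy of Lemma~\ref{lem:tradeoff}, but with the roles of width and height swapped (this is the ``constant switching of $x$ and $y$'' alluded to in the introduction), and to use the general algorithm from Lemma~\ref{lem:tradeoff} as the workhorse for drawing the hanging subtrees. A chain from $v_0$ to $v_k$ consists of the path $v_0 v_1 \cdots v_{k-1}$ (the parent of $v_k$) together with subtrees hanging off this path; the structure to exploit is that the chain has a distinguished ``entry'' corner ($v_0$, top-left) and ``exit'' corner (parent of $v_k$, bottom-left) that must lie on the same (left) side of the bounding box.

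First I would identify, along the spine of the chain, an $A$-skewed centroid analogue: walk down the path from $v_0$ and let $v_j$ be the last spine node such that the chain-suffix from $v_j$ to $v_k$ still has size $> n - A$; then the prefix portion of the chain (the part of the spine from $v_0$ to $v_j$ together with all subtrees hanging off it, but stopping before $v_j$'s continuation toward $v_k$) has size $O(A)$, and the subtrees hanging off $v_j$ that are \emph{not} on the $v_j$-to-$v_k$ spine each have size $\le \min\{n-A, n/2\}$ while the continued chain from (a child of) $v_j$ down to $v_k$ has size $\le n-A$. Next I would draw the $O(A)$-size prefix using the given chain algorithm $\CHAIN_0$ with its coordinate axes transposed — i.e.\ apply $\CHAIN_0$ and then reflect/rotate so that it occupies width at most $H_0(A)$ and height at most $W_0(A)$, with $v_0$ at the appropriate corner and the spine exiting at $v_j$. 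Then I would place, stacked \emph{horizontally} (rather than vertically, as in Lemma~\ref{lem:tradeoff}), the drawings of the $O(1)$ off-spine subtrees at $v_j$ and the drawing of the remaining chain from $v_j$ to $v_k$: the off-spine subtrees (size $\le n/2$) are drawn by the general algorithm of Lemma~\ref{lem:tradeoff} with parameter $A$, giving width $O((n_i/A)H_0(A))$ and height $O(W_0(A)+\log n_i)$ each, and the remaining chain (size $\le n-A$) is drawn by recursively invoking the improved chain algorithm itself. Finally I would connect $v_j$ to the roots of these pieces with short orthogonal edge-routes along the boundary between consecutive boxes, and handle the degenerate cases $k=1$ and $k=0$ as in Lemma~\ref{lem:tradeoff}.

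The width recurrence then has the shape
\[
W(n)\ \le\ W(n-A)\ +\ O((n/A)H_0(A)),
\]
summed over the off-spine subtrees plus the prefix, which telescopes over at most $n/A$ levels to $W(n)=O((n/A)H_0(A))$; and the height recurrence is
\[
H(n)\ \le\ \max\{\, H(n/2)+1,\ H(n-A),\ O(W_0(A)+\log n)\,\},
\]
which solves to $H(n)=O(W_0(A)+\log n)$, exactly as the width recurrence did in Lemma~\ref{lem:tradeoff}. These are the transposes of the two recurrences in Lemma~\ref{lem:tradeoff}, so the bookkeeping is essentially the same (including the charging argument bounding the recursion-tree size by $2n/A - 1$).

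The main obstacle I expect is not the recurrences but the \emph{geometry of the connections} and the placement of the exit corner: because we are stacking horizontally while the chain must enter at the top-left and exit (at the parent of $v_k$) at the bottom-left of the overall box, we must be careful that the recursively-drawn chain-remainder is positioned so that \emph{its} exit corner aligns with the global bottom-left corner, and that the orthogonal edges from $v_j$ down into each of the $O(1)$ sub-boxes (and the spine edge continuing toward $v_k$) can be routed without crossings — this forces a specific left-to-right ordering of the sub-boxes (off-spine subtrees first, chain-remainder last, or vice versa) and possibly a reflection of individual sub-drawings so their roots sit on the correct side. Verifying planarity of this orthogonal wiring, and that every claimed corner-placement invariant ($v_0$ top-left, parent of $v_k$ bottom-left) is maintained through the recursion, is the part that needs genuine care; once the picture is set up correctly (Figure~\ref{fig_chain}), the area analysis is routine.
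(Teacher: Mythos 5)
Your core decomposition is essentially the paper's: transpose $x$ and $y$, cut the spine of the chain into $O(n/A)$ consecutive pieces, draw each sub-chain of total size at most $A$ with $\CHAIN_0$ and each large hanging subtree with the general algorithm of Lemma~\ref{lem:tradeoff}, and stack the pieces horizontally. (The paper does this in one shot---maximal ``blocks'' of total size at most $A$ alternating with ``singleton'' spine nodes whose hanging subtree may be large---rather than by recursing on the chain remainder, but unrolling your recursion yields essentially that partition, and your observation that each level consumes at least $A$ nodes gives the same $O(n/A)$ piece count.) Two points need repair, one minor and one substantive. Minor: the width recurrence as written, $W(n)\le W(n-A)+O((n/A)H_0(A))$, iterates to $O((n/A)^2H_0(A))$; the correct per-level charge is $O((m/A)H_0(A))$ where $m\ge A$ is the number of nodes consumed at that level, which then telescopes to $O((n/A)H_0(A))$ as you intend. (Also, the bound of $n/2$ you claim for the off-spine subtrees does not hold---the spine of a chain is the prescribed path $v_0\cdots v_k$, not the heavy path, so a single hanging subtree can contain almost the entire chain---but this is harmless since you draw those subtrees with Lemma~\ref{lem:tradeoff} rather than recursively.)

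Substantive: the requirement that the parent of $v_k$ land at the bottom-\emph{left} corner of the overall bounding box is not a detail you can defer to ``setting up the picture correctly''; with your left-to-right recursion it fails structurally. The chain remainder is the rightmost piece, so its exit vertex sits at the bottom of the rightmost sub-box, i.e.\ in the bottom interior of the global box, and simple reflections of individual sub-boxes cannot move it to the global bottom-left while keeping $v_0$ at the top left. Since this corner condition is exactly what allows the improved algorithm to be fed back in as the next $\CHAIN_0$ (and what Lemma~\ref{lem:tradeoff} needs in order to attach the subtree at $v_k$ directly below the chain), the bootstrapping collapses without it. The paper's fix is to peel off the last three spine nodes $v_{k-3},v_{k-2},v_{k-1}$ as special singletons and place the drawings of $T_{k-2}$ and $T_{k-1}$ \emph{below} the main horizontal row, flipped upside-down, so that the spine runs rightward along the top and returns along the bottom, putting $v_{k-1}$ at the bottom-left corner (Figure~\ref{fig_chain}). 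You need this device or an equivalent one, and it fits the one-shot decomposition more naturally than a recursion that must maintain the corner invariant at every level.
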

\begin{proof}
Let $v_0v_1\cdots v_k$ denote the path from $v_0$ to $v_k$.
Let $T_i$ denote the subtree at the
sibling of $v_{i+1}$.  Let $n_i$ be the size of $T_i$ plus 1.

Divide the sequence $v_0v_1\cdots v_{k-4}$ into subsequences,
where each subsequence is either (i)~a \emph{singleton\/} $v_i$,
or (ii)~a contiguous \emph{block\/} $v_iv_{i+1}\cdots v_\ell$ of length at least 2 with $n_i+n_{i+1}+\cdots+n_\ell\le A$.
By making the blocks maximal, we can
ensure that the number of singletons and blocks is $O(n/A)$.
We add $v_{k-3},\ldots,v_{k-1}$ as 3 extra singletons.
\begin{itemize}
\item For each singleton $v_i$,
draw $T_i$ by the general algorithm in Lemma~\ref{lem:tradeoff}
if $n_i\ge A$, or 
directly by the given algorithm ${\cal C}_0$ if $n_i<A$. 
By swapping $x$ and $y$, the width is $O((n_i/A+1)H_0(A))$
and the height is $O(W_0(A)+\log n)$.  
\item For each block $v_iv_{i+1}\cdots v_\ell$,
draw the subchain from $v_i$ to $v_{\ell+1}$,
which has size at most $A$, by the given algorithm ${\cal C}_0$.
By swapping $x$ and $y$,
the width is $O(H_0(A))$
and the height is $O(W_0(A))$.  
\end{itemize}
All these drawings are stacked horizontally as shown in Figure~\ref{fig_chain},
except for $T_{k-2}$ and $T_{k-1}$, which are placed below and flipped upside-down.

The special cases with $k\le 3$ are simpler: just stack the $O(1)$ drawings
 vertically, with the bottom drawing of $T_{k-1}$ flipped upside-down.

The total width due to singletons is $O(\sum_i (n_i/A+1)H_0(A))=O((n/A)H_0(A))$, and
the total width due to blocks is also $O((n/A)H_0(A))$, because
the number of singletons and blocks is $O(n/A)$.  The overall height is $O(W_0(A)+\log n)$.
\end{proof}

Assume inductively that there is a chain algorithm ${\cal C}_0$ satisfying the assumption of Theorem~\ref{thm:chain} with 
$W_0(n)=C_j (n/\log n)\log^{(j)}n$ 
and $H_0(n) = C_j \log n$ for some $C_j$, where $\log^{(j)}$ denotes the $j$-th iterated logarithm.  For $j=1$, this follows 
by simply applying the standard algorithm to draw the subtrees $T_i$
in the proof of Theorem~\ref{thm:chain},  with $C_1=O(1)$.

Choosing $A=\up{\log n \log\log n/\log^{(j+1)}n}$
in Theorem~\ref{thm:chain}
gives a width bound of
\PAPER{
$$O((n/A)H_0(A))\,=\,O((n/A) C_j\log A)\,=\,
O(C_j (n/\log n)\log^{(j+1)}n)$$
}\LIPICS{%
$O((n/A)H_0(A))\,=\,O((n/A) C_j\log A)\,=\,
O(C_j (n/\log n)\log^{(j+1)}n)$%
}
and a height bound of 
\PAPER{
$$O(W_0(A)+\log n)\,=\,
O(C_j (A/\log A)\log^{(j)}A + \log n)
\,=\,O(C_j \log n).$$
}\LIPICS{%
$O(W_0(A)+\log n)\,=\,
O(C_j (A/\log A)\log^{(j)}A + \log n)
\,=\,O(C_j \log n).$
}
By setting $C_{j+1}=O(1)\cdot C_j$, we have thus obtained
a new chain algorithm with
$W_0(n)=C_{j+1}(n/\log n)\log^{(j+1)}n$ and 
$H_0(n)=C_{j+1}\log n$.

Note that $C_j=2^{O(j)}$. For the best bound, 
we choose a nonconstant $j=\log^* n$, yielding:

\begin{corollary}
Every binary tree of size $n$ has a straight-line orthogonal
drawing with area $n2^{O(\log^*n)}$.
\end{corollary}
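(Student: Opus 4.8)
The plan is to close the loop on the bootstrapping scheme by feeding the chain algorithm of Theorem~\ref{thm:chain} back into itself and tracking how the constant $C_j$ grows. First I would fix the inductive hypothesis stated just above the corollary: there is a chain algorithm $\CHAIN_0$ with $W_0(n)=C_j(n/\log n)\log^{(j)}n$ and $H_0(n)=C_j\log n$, where the base case $j=1$ comes from applying the standard algorithm directly to draw the subtrees $T_i$ in Theorem~\ref{thm:chain} (the standard algorithm gives logarithmic height and linear width, so after swapping $x$ and $y$ we get $W_0(n)=O(n/\log n)\cdot\log n=O(n)$ — actually $W_0(n)=O(n)$ and $H_0(n)=O(\log n)$, matching $C_1=O(1)$ with $\log^{(1)}n=\log n$). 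Then I would invoke Theorem~\ref{thm:chain} with the choice $A=\up{\log n\log\log n/\log^{(j+1)}n}$, which is exactly the balancing choice that makes the width contribution $(n/A)H_0(A)=O((n/A)C_j\log A)$ collapse to $O(C_j(n/\log n)\log^{(j+1)}n)$ and the height contribution $W_0(A)+\log n=O(C_j(A/\log A)\log^{(j)}A+\log n)$ collapse to $O(C_j\log n)$.

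The key arithmetic to spell out is why $A=\up{\log n\log\log n/\log^{(j+1)}n}$ works in both estimates. For the width, $(n/A)\log A=O((n/A)\log\log n)$ since $\log A=O(\log\log n)$, and $(1/A)\log\log n=\log^{(j+1)}n/\log n$ by the choice of $A$. For the height, $(A/\log A)\log^{(j)}A$: here $\log A=\Theta(\log\log n)$ so $A/\log A=\Theta(\log n/\log^{(j+1)}n)$, and $\log^{(j)}A=\Theta(\log^{(j+1)}n)$ because iterating the log $j$ times starting from $A\approx\log n/\log^{(j+1)}n$ is comparable to iterating $j+1$ times from $n$; multiplying gives $\Theta(\log n)$. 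So the new chain algorithm has $W_0(n)=C_{j+1}(n/\log n)\log^{(j+1)}n$ and $H_0(n)=C_{j+1}\log n$ with $C_{j+1}=O(1)\cdot C_j$, hence $C_j=2^{O(j)}$ by unrolling the recursion on the constant.

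Finally I would set $j=\log^*n$: then $\log^{(j)}n=O(1)$, so $W_0(n)=C_{\log^*n}\cdot(n/\log n)\cdot O(1)=O(n)\cdot 2^{O(\log^*n)}/\log n$ and $H_0(n)=2^{O(\log^*n)}\log n$, and the area of the resulting drawing of the whole tree (viewing it as a chain from the root to an artificial leaf, as noted before Lemma~\ref{lem:tradeoff}) is $W_0(n)\cdot H_0(n)=O(n)\cdot 2^{O(\log^*n)}=n2^{O(\log^*n)}$. One subtlety to handle: the number of bootstrapping rounds is itself $\log^*n$, which grows with $n$, so the inductive claim ``for every $j$ there is a $\CHAIN_0$ with constant $C_j$'' must be read as: for each fixed target $n$, run $j=\log^*n$ rounds, and the constant $C_{\log^*n}=2^{O(\log^*n)}$ stays inside the claimed $2^{O(\log^*n)}$ factor.

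The main obstacle I expect is not any single step but rather keeping the iterated-logarithm bookkeeping honest across the two-way width/height swap: every application of Theorem~\ref{thm:chain} transposes the roles of width and height, and one has to be careful that ``$\log^{(j)}$ of something that is itself roughly $\log n/\log^{(j+1)}n$'' really is $\Theta(\log^{(j+1)}n)$ and not off by an iterate. A clean way to sidestep errors is to prove the slightly loose bounds $\log^{(j)}A\le\log^{(j+1)}n+O(1)$ and $\log^{(j)}A\ge\log^{(j+2)}n$ (say) from $\log^{(j)}n\le A\le\log n$ by monotonicity of $\log^{(i)}$, which suffices for the asymptotic balancing. Everything else — the recurrences $W(n)$, $H(n)$ in Lemma~\ref{lem:tradeoff}, the maximality-of-blocks argument bounding the number of singletons and blocks by $O(n/A)$, and the final area product — is routine once the constant growth $C_j=2^{O(j)}$ is in hand.
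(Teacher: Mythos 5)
Your proposal is correct and follows essentially the same route as the paper: the same inductive hypothesis $W_0(n)=C_j(n/\log n)\log^{(j)}n$, $H_0(n)=C_j\log n$ with base case from the standard algorithm, the same balancing choice $A=\up{\log n\log\log n/\log^{(j+1)}n}$ in Theorem~\ref{thm:chain}, the same constant growth $C_{j+1}=O(1)\cdot C_j$ giving $C_j=2^{O(j)}$, and the final choice $j=\log^*n$. The extra bookkeeping you supply (verifying $\log^{(j)}A=\Theta(\log^{(j+1)}n)$ and noting that the number of bootstrapping rounds depends on $n$) is consistent with, and slightly more explicit than, what the paper writes.
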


\PAPER{
Tradeoffs can then be obtained by one final application of the general algorithm in
Lemma~\ref{lem:tradeoff},
with width $O(W_0(A)+\log n)=
O(C_j(A/\log A)\log^{(j)}A + \log n)$
and height $O((n/A)H_0(A))=O(C_j(n/A)\log A)$.
Setting $\AAA=C_j(A/\log A)$ and  $j=\log^* n$
yields:
 
\begin{corollary}
For any given $\log n\le \AAA \le n/\log n$,
every binary tree of size $n$ has a straight-line
orthogonal drawing with 
width $O(\AAA)$
and height $(n/\AAA) 2^{O(\log^*\AAA)} $.
\end{corollary}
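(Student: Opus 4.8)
The plan is to obtain the tradeoff as a re-parametrization of the bootstrap that produced the $n2^{O(\log^* n)}$-area drawing. Recall that iterating Theorem~\ref{thm:chain}, starting from the standard algorithm at $j=1$, yields for every $j$ a chain algorithm $\CHAIN_0$ with $W_0(n)=C_j(n/\log n)\log^{(j)}n$ and $H_0(n)=C_j\log n$, where $C_j=2^{O(j)}$. I would fix $j:=\log^* n$ once and for all, so that $\log^{(j)}m=O(1)$ for every $m\le n$ and $C_j=2^{O(\log^* n)}$; the resulting chain algorithm then has $W_0(m)=O(C_j\,m/\log m)$ and $H_0(m)=O(C_j\log m)$ throughout the range $m\le n$.

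Next I would plug this chain algorithm into the general tradeoff algorithm of Lemma~\ref{lem:tradeoff}, run with a parameter $A$ to be chosen from the desired width. By Lemma~\ref{lem:tradeoff} this produces a straight-line orthogonal drawing of the whole tree with width $O(W_0(A)+\log n)=O(C_j\,A/\log A+\log n)$ and height $O((n/A)H_0(A))=O(C_j(n/A)\log A)$, the root at the top-left corner. To realize a prescribed width $\AAA$, I would set $\AAA:=\Theta(C_j\,A/\log A)$ and solve for $A$: the right-hand side is increasing in $A$, so a suitable positive integer $A$ exists. With this choice the width is $O(\AAA)$ --- the additive $\log n$ is absorbed because $\AAA\ge\log n$ --- and, substituting $A/\log A=\Theta(\AAA/C_j)$ into the height, the height becomes $O(C_j(n/A)\log A)=O(C_j^2\,n/\AAA)=(n/\AAA)\,2^{O(\log^* n)}$.

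It remains to tidy up the exponent and check feasibility. Since $\AAA$ lies in $[\log n,\ n/\log n]$ we have $\log^*\AAA=\log^* n-O(1)$, hence $2^{O(\log^* n)}=2^{O(\log^*\AAA)}$, giving the claimed height $(n/\AAA)2^{O(\log^*\AAA)}$. For feasibility of the reduction one checks that $\AAA\le n/\log n$ forces $A\le n$ (with room to spare, since $A\le\AAA\log A\le\AAA\log n$), that $\AAA\ge\log n$ makes $A$ large enough that $\log A$ and $\log^{(j)}A$ behave as used, and that $A\ge 1$ is automatic; these are precisely the hypotheses $n\ge A\ge 1$ required by Lemma~\ref{lem:tradeoff}.

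I expect the only real obstacle to be bookkeeping, not ideas: one must track the iterated-logarithm terms carefully across the bootstrap, confirm that the $+\log n$ slack in the width never dominates $\AAA$, and justify folding $C_j^2$ and $\log^* n$ into $2^{O(\log^*\AAA)}$ over the stated range of $\AAA$. No new geometry or recursion is involved --- this corollary is simply the width--height tradeoff form of the $n2^{O(\log^* n)}$-area bound, obtained by one last call to the general algorithm of Lemma~\ref{lem:tradeoff}.
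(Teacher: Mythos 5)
Your proposal is correct and follows essentially the same route as the paper: one final application of the general algorithm of Lemma~\ref{lem:tradeoff} with the $j=\log^*n$ chain algorithm, setting $\AAA=\Theta(C_j A/\log A)$ and absorbing the additive $\log n$ and the $C_j$ factors into $2^{O(\log^*\AAA)}$ via $\log^*\AAA=\log^*n-O(1)$ on the stated range. The bookkeeping you flag (feasibility of $A$, the $\log^{(j)}A=O(1)$ simplification) is exactly what the paper leaves implicit.
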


}

\LIPICS{

\small
\bibliographystyle{abbrvurl}
\bibliography{more_trees}

}

\PAPER{ 

\section{Straight-Line Order-Preserving Drawings of Binary Trees}
\label{sec:logstar2}

We now note how to adapt the algorithm
from Section~\ref{sec:logstar} to
straight-line non-orthogonal order-preserving  drawings.
This improves the previous algorithm with $O(n\log\log n)$ area
by Garg and Rusu~\cite{GarRus03}.

The new algorithm follows the same
recursion and analysis as in Section~\ref{sec:logstar}, except that the geometric placement of subtrees is different.
We describe these differences.
In the given chain algorithm $\CHAIN_0$, the output requirement is changed to the following: $v_0$ may be
placed anywhere on the left side of the bounding box, with no other vertices placed on the left side,  and the parent of
$v_k$ may be placed anywhere \emph{on the right side\/} of
the bounding box, with no other vertices on the right side. 
We further require that order is preserved around the parent of $v_k$ 
even if we were to add $v_k$ to the drawing, placed anywhere to the right
of the bounding box.

In the general algorithm,
the requirement is that $v_0$ is
be placed on the left side of the bounding box, with no
vertices placed directly above $v_0$.

The general algorithm in Lemma~\ref{lem:tradeoff} can be modified
as shown in Figure~\ref{fig_tradeoff_new}, with drawings of the two children
of $v_k$ reflected.  This is similar to the previous algorithm
of Garg and Rusu~\cite{GarRus03}.  
The base case $k=1$ is similar,
except that we just use the algorithm $\CHAIN_0$ to draw
the subtree at the sibling of $v_1$ (which may be placed above or
below $v_0$), and connect from $v_0$ to
$v_k$ directly.
The base case $k=0$ is easier, without needing to reflect.

\fig{fig_tradeoff_new}{\PAPER{0.8}\LIPICS{0.7}}{The general algorithm
for order-preserving drawings.  The dotted lines show
an alternative placement of the subtree at $v_{k+1}$'s sibling when $v_{k+1}$ is a left child instead.}

The improved chain algorithm in Theorem~\ref{thm:chain} can be modified
as shown in Figure~\ref{fig_chain_new}, with some drawings flipped
upside-down (besides swapping of $x$ and $y$).  The path $v_0v_1\cdots v_{k-1}$
may now oscillate more in $y$, but the height bound is still
the same within constant factors.

\fig{fig_chain_new}{\PAPER{0.8}\LIPICS{0.7}}{The improved chain algorithm 
for order-preserving drawings.  The dotted lines show
alternative placements of drawings when the order of various siblings is reversed.
}

\begin{corollary}
Every binary tree of size $n$ has a straight-line order-preserving
drawing with area $n2^{O(\log^*n)}$.
\end{corollary}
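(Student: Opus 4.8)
The plan is to recycle the bootstrapping recursion of Section~\ref{sec:logstar} essentially verbatim, changing only the geometry of how subtrees are assembled so that the cyclic order of children is respected everywhere. First I would restate the chain algorithm ${\cal C}_0$ with a slightly relaxed interface suited to order-preservation: $v_0$ may sit anywhere on the left side of the bounding box (with nothing else on that side), the parent of $v_k$ anywhere on the right side (with nothing else on that side), and, crucially, the counterclockwise order of parent, left child, right child around the parent of $v_k$ must remain correct even after $v_k$ is reattached somewhere to the right of the box. With this interface, the order-preserving analogue of Lemma~\ref{lem:tradeoff} draws the chain $v_0\cdots v_k$ once by ${\cal C}_0$ and then recursively draws the two subtrees at the children of $v_k$, stacked vertically beneath it; the only change from the orthogonal version is that the two child drawings are reflected so that the left and right child land on their correct sides of $v_k$ (Figure~\ref{fig_tradeoff_new}), which is exactly the device used by Garg and Rusu~\cite{GarRus03}. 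The base cases $k=0,1$ are handled as before.

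Next I would push the same modification through Theorem~\ref{thm:chain}: partition $v_0\cdots v_{k-4}$ into singletons and maximal blocks of total weight at most $A$, draw each singleton's hanging subtree $T_i$ by the order-preserving general algorithm (or by ${\cal C}_0$ when $n_i<A$) and each block's subchain by ${\cal C}_0$, swap $x$ and $y$, and stack the pieces horizontally --- now additionally flipping certain pieces upside down so that at every node of the path the parent/left/right cyclic order is the prescribed one (Figure~\ref{fig_chain_new}). The path $v_0\cdots v_{k-1}$ may oscillate more in the $y$-direction than in the orthogonal case, but since there are only $O(n/A)$ singletons and blocks the total height remains $O(W_0(A)+\log n)$ and the total width remains $O((n/A)H_0(A))$: the two recurrences are literally those solved in Section~\ref{sec:logstar}.

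Because the width and height recurrences are unchanged, the inductive bootstrapping is identical. Starting from the base case $j=1$ (the standard algorithm, $C_1=O(1)$), a chain algorithm with $W_0(n)=C_j(n/\log n)\log^{(j)}n$ and $H_0(n)=C_j\log n$ yields, on choosing $A=\up{\log n\log\log n/\log^{(j+1)}n}$ in Theorem~\ref{thm:chain}, a new one with $W_0(n)=C_{j+1}(n/\log n)\log^{(j+1)}n$, $H_0(n)=C_{j+1}\log n$, and $C_{j+1}=O(1)\cdot C_j$, so $C_j=2^{O(j)}$. Taking $j=\log^*n$ and applying ${\cal C}_0$ to the whole tree, viewed as a chain from the root to an artificial leaf, gives a straight-line order-preserving drawing of area $n2^{O(\log^*n)}$.

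I expect the main obstacle to be purely geometric bookkeeping rather than any new idea: one must verify that every reflection and upside-down flip in Figures~\ref{fig_tradeoff_new} and~\ref{fig_chain_new} simultaneously (i)~keeps the drawing planar with straight-line edges, (ii)~keeps $v_0$ and the parent of $v_k$ on their designated box sides with the sides otherwise empty, and (iii)~preserves the counterclockwise parent/left/right order at every node --- most delicately at the chain endpoints, where the order constraint has to survive the later reattachment of $v_k$ and of the edge entering $v_0$. Everything quantitative (the recurrences, the choice of $A$, the $\log^*$ unrolling) carries over from the orthogonal case with no change.
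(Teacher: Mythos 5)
Your proposal matches the paper's own treatment in Section~\ref{sec:logstar2} essentially point for point: the same relaxed interface for $\CHAIN_0$ (endpoints anywhere on the left/right sides, order around the parent of $v_k$ surviving the reattachment of $v_k$), the same reflections in the general algorithm and upside-down flips in the improved chain algorithm as in Figures~\ref{fig_tradeoff_new} and~\ref{fig_chain_new}, and the observation that the recurrences and the $\log^*$ bootstrapping carry over unchanged. This is the paper's argument, correctly reproduced.
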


\PAPER{

\begin{corollary}
Given any $\log n\le \AAA \le n/\log n$,
every binary tree of size $n$ has a straight-line
order-preserving drawing with 
width $O(\AAA)$
and height $(n/\AAA) 2^{O(\log^*\AAA)} $.
\end{corollary}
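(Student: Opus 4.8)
The plan is to mirror the orthogonal tradeoff corollary, now using the order-preserving versions of the general algorithm (Lemma~\ref{lem:tradeoff}, modified as in Figure~\ref{fig_tradeoff_new}) and of the improved chain algorithm (Theorem~\ref{thm:chain}, modified as in Figure~\ref{fig_chain_new}). First I would observe that the modified routines satisfy exactly the same width/height recurrences as their orthogonal counterparts: the reflections and upside-down flips in Figures~\ref{fig_tradeoff_new} and~\ref{fig_chain_new} change neither the bounding-box dimensions nor planarity, and (by the argument of Section~\ref{sec:logstar2}) they preserve the cyclic order around every vertex, in particular around $v_0$ (nothing directly above it) and around the parent of $v_k$ (order preserved even when the hypothetical $v_k$ is attached to the right). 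Hence the entire bootstrapping of Section~\ref{sec:logstar} applies verbatim.

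Concretely, I would rerun the induction on $j$: starting from a chain algorithm with $W_0(n)=C_j(n/\log n)\log^{(j)}n$ and $H_0(n)=C_j\log n$ (the case $j=1$ coming from the standard algorithm, $C_1=O(1)$), plugging $A=\up{\log n\log\log n/\log^{(j+1)}n}$ into the modified Theorem~\ref{thm:chain} yields a chain algorithm with $W_0(n)=C_{j+1}(n/\log n)\log^{(j+1)}n$ and $H_0(n)=C_{j+1}\log n$, where $C_{j+1}=O(1)\cdot C_j$, so $C_j=2^{O(j)}$. Taking $j=\log^*n$ gives an order-preserving chain algorithm with $W_0(n)=2^{O(\log^*n)}(n/\log n)$ and $H_0(n)=2^{O(\log^*n)}\log n$.

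For the tradeoff itself, I would make one last call to the modified general algorithm of Lemma~\ref{lem:tradeoff} with parameter $A$: it outputs an order-preserving drawing of width $O(W_0(A)+\log n)=O(C_j(A/\log A)\log^{(j)}A+\log n)$ and height $O((n/A)H_0(A))=O(C_j(n/A)\log A)$, for $j=\log^*n$. Then I set $\AAA=C_j(A/\log A)$ and solve for $A$, which is a legitimate choice whenever $\log n\le\AAA\le n/\log n$. Since $j=\log^*n\ge\log^*A$ we have $\log^{(j)}A=O(1)$, so the width is $O(\AAA+\log n)=O(\AAA)$ using $\AAA\ge\log n$; and the height is $O(C_j^2\,n/\AAA)=(n/\AAA)2^{O(\log^*n)}=(n/\AAA)2^{O(\log^*\AAA)}$, using $\log^*n=\log^*\AAA+O(1)$ since $\AAA\ge\log n$. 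This gives the claimed bounds.

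The one place that demands real care---and it is the substance of Section~\ref{sec:logstar2}, not of this corollary---is verifying the order-preservation invariants through the reflected and flipped layouts; once those are granted, the corollary reduces to the same parameter arithmetic as in the orthogonal tradeoff and presents no further obstacle.
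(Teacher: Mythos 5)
Your proposal is correct and follows essentially the same route as the paper: the paper obtains this corollary by one final application of the (order-preserving-modified) general algorithm of Lemma~\ref{lem:tradeoff} on top of the bootstrapped chain algorithm, with the same substitution $\AAA=C_j(A/\log A)$ and $j=\log^*n$ used in the orthogonal tradeoff corollary. Your parameter arithmetic (including the observations that $\log^{(j)}A=O(1)$ and $\log^*n=\log^*\AAA+O(1)$ for $\AAA\ge\log n$) matches what the paper leaves implicit.
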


}

\noindent {\em Remarks.} It is straightforward to implement 
the algorithms in Section~\ref{sec:logstar} and this section to run in linear time.

\LIPICS{
For both the algorithm in Section~\ref{sec:logstar} and this section,
we can obtain height--width tradeoffs by one more application of
the general algorithm in
Lemma~\ref{lem:tradeoff}.
}

These improved results raise the next logical question: is linear area possible for straight-line orthogonal drawings,
or straight-line order-preserving drawings?  Also, could the ideas
here improve the $O(n\log\log n)$ area bound for straight-line upward drawings of binary trees by Shin, Kim, and 
Chwa~\cite{ShKiCh}?

\section{Straight-Line Orthogonal Order-Preserving Drawings of Binary Trees}\label{sec:orthord}

In this section, we consider straight-line (non-upward)
drawings of binary trees that are
both orthogonal and order-preserving.
We improve a previous algorithm by Frati~\cite{Fra07} with
$O(n^{3/2})$ area.
Here, our idea is to adapt an approach by Chan~\cite[Section~5]{SODA99}
for obtaining $n2^{O(\sqrt{\log n})}$ area bounds,
originally designed
for a different class of drawings (straight-line, non-orthogonal, strictly upward, order-preserving).  Our new algorithm follows the
same recursion and analysis
as in \cite{SODA99}, but the geometric placement of subtrees is more
involved.

We describe a recursive algorithm to draw $T$, where the root $v_0$
is placed inside the bounding box, with the requirement that
planarity and order is preserved even if we were to add a new edge
to the drawing, entering $v_0$ horizontally from the right:

Let $A$ be a parameter to be chosen later.
Let $v_0v_1v_2\cdots$ be the heavy path,
and $v_k$ be the $A$-skewed centroid, as defined in the
proof of Theorem~\ref{thm:general}.
Recursively draw the subtrees at the siblings of $v_1,\ldots,v_k$,
as well as the subtrees at the two children of $v_k$.

We assume that $v_1$ is a right child (the other case is symmetric, as explained later).

Let $j$ be the largest index such that $v_j$ is a left child with $j\le k$.  Then $v_jv_{j+1}\cdots v_k$ is
a rightward path.
We put the drawings together in one of the two ways depicted in 
Figure~\ref{fig_orth_ord}, depending on whether
$v_{j-1}$ is a right child or a left child.

\fig{fig_orth_ord}{\PAPER{1}\LIPICS{0.9}}{Algorithm for orthogonal order-preserving drawings.  Option~1 is for the case when $v_{j-1}$ is a right 
child (as in the tree depicted).  Option~2 is for the
case when $v_{j-1}$ is a left child instead.}

An issue arises from the shaded parts in the figure, i.e., the drawings of
the subtree at the sibling of $v_j$ and at the right child of $v_k$.
In these drawings, the root of such a subtree needs to be
reachable vertically from the left or right side of the bounding box (rather than horizontally).  Fortunately, such a drawing can be obtained recursively as shown
in Figure~\ref{fig_aux}(a).

\fig{fig_aux}{\PAPER{1}\LIPICS{0.9}}{(a) Making the root reachable vertically from the right side of the bounding box. (b) The special case $j=2$. (c) The special case when $j$ does not exist (i.e., $v_0v_1\cdots v_k$ is a rightward path).}

The special cases when $j=2$ or when $j$ does not exist are described
in Figure~\ref{fig_aux}(b,c).  (The special case $j=1$ cannot occur, since $v_1$ is assumed to be a right child.)

The case when $v_1$ is a left child can be handled by
flipping all the drawings upside-down and swapping ``left'' with ``right''.

The overall width satisfies the recurrence
\[ W(n)\ \le\ \max\{2\,W(A),\,W(n-A)\} + O(1).
\]
Iterating on the second term over a common $A$ gives
$W(n)\:\le\: 2\,W(A) + O(n/A).$
Setting $A=n/2^{\sqrt{2\log n}}$ gives
\[ W(n)\ \le\ 2\,W(n/2^{\sqrt{2\log n}})+O(2^{\sqrt{2\log n}}),
\]
which solves to $W(n)=O(2^{\sqrt{2\log n}}\sqrt{\log n})$,
as shown in~\cite{SODA99}.
The height of the drawing is trivially bounded by $n$ (since
each row should contain at least one node).

\begin{theorem}
Every binary tree of size $n$ has a straight-line orthogonal order-preserving
drawing with area $O(n2^{\sqrt{2\log n}}\sqrt{\log n})$.
\end{theorem}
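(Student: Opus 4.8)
The plan is to follow verbatim the recursion and analysis already set up above, which itself mirrors Chan's construction for strictly upward order-preserving drawings~\cite[Section~5]{SODA99}, and to verify that the new geometric placement keeps the drawing planar, orthogonal, and order-preserving. I would argue by induction on $n$, maintaining the stated invariant that a horizontal edge may be added entering $v_0$ from the right side of the bounding box without violating planarity or order.

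First I would fix the parameter $A$, locate the heavy path $v_0v_1v_2\cdots$ and the $A$-skewed centroid $v_k$ (as in the proof of Theorem~\ref{thm:general}), and recursively draw the subtrees hanging off $v_1,\dots,v_k$ together with the two children of $v_k$. By the symmetry noted in the text (flip upside-down, swap ``left'' and ``right''), I may assume $v_1$ is a right child. I would then take $j$ to be the largest index $\le k$ with $v_j$ a left child, so that $v_jv_{j+1}\cdots v_k$ is a rightward path, and assemble the pieces by Option~1 or Option~2 of Figure~\ref{fig_orth_ord} according to whether $v_{j-1}$ is a right or a left child. The two pieces needing extra care---the subtree at the sibling of $v_j$ and the subtree at the right child of $v_k$---are precisely those whose roots must be entered \emph{vertically} rather than horizontally; here I would invoke the auxiliary variant of the recursion from Figure~\ref{fig_aux}(a), which makes the root reachable vertically from a side of the box at the cost of only $O(1)$ extra width. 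The degenerate cases $j=2$ and ``$j$ does not exist'' (i.e.\ $v_0v_1\cdots v_k$ is already rightward) are handled directly as in Figure~\ref{fig_aux}(b,c), and $j=1$ cannot occur since $v_1$ is a right child.

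For the size bounds, observe that each assembly stacks a constant number of recursively drawn pieces, at most one of which---the subtree at $v_{k+1}$, of size at most $n-A$---is ``large'', while all other recursive pieces have size at most $A$, and only $O(1)$ extra columns are introduced. Hence the width obeys $W(n)\le\max\{2\,W(A),\,W(n-A)\}+O(1)$. Iterating the $W(n-A)$ branch down to a subproblem of size at most $A$ over a common $A$ gives $W(n)\le 2\,W(A)+O(n/A)$; choosing $A=n/2^{\sqrt{2\log n}}$ yields $W(n)\le 2\,W(n/2^{\sqrt{2\log n}})+O(2^{\sqrt{2\log n}})$, which unrolls to $W(n)=O(2^{\sqrt{2\log n}}\sqrt{\log n})$ exactly as in~\cite{SODA99}. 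Since every row of an orthogonal drawing must contain at least one vertex, the height is at most $n$, and multiplying gives the claimed area $O(n\,2^{\sqrt{2\log n}}\sqrt{\log n})$.

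I expect the main obstacle to lie entirely in the case analysis of the layout: one must check that, for every combination of ``$v_i$ is a left or right child'' along the relevant stretch of the heavy path, the placements in Figures~\ref{fig_orth_ord} and~\ref{fig_aux} keep all edges axis-parallel, create no crossings, preserve the counterclockwise parent/left-child/right-child order at every node (including at the parent of $v_k$, whose edge from $v_k$ is added only after the subchain is placed), and respect both the incoming-edge invariant for $v_0$ and the vertical-reachability invariant for the shaded subtrees. The recursion and the arithmetic are identical to~\cite{SODA99}; all of the new difficulty is in these orthogonal-plus-order-preserving placement details.
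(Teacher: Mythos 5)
Your proposal follows the paper's own argument essentially verbatim: the same heavy-path/$A$-skewed-centroid recursion, the same case analysis via the index $j$ and the layouts of Figures~\ref{fig_orth_ord} and~\ref{fig_aux}, and the identical recurrence $W(n)\le\max\{2\,W(A),\,W(n-A)\}+O(1)$ with $A=n/2^{\sqrt{2\log n}}$ and the trivial height bound of $n$. The only slight imprecision is your claim that every recursive piece other than the subtree at $v_{k+1}$ has size at most $A$---the subtree at the sibling of $v_{k+1}$ can have size up to $\min\{n-A,\,n/2\}$, but since it is stacked so as to contribute $W(n-A)$ to the same branch of the maximum, the recurrence and the final bound are unaffected.
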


\noindent {\em Remarks.} 
It is straightforward to implement the algorithm to run in linear time. 

The original $n2^{O(\sqrt{\log n})}$ algorithm by Chan~\cite{SODA99}
for straight-line, strictly upward, order-preserving  drawings
of binary trees was subsequently surpassed by the $O(n\log n)$
algorithm by Garg and Rusu~\cite{GarRus03}.
However, we do not see how to adapt Garg and Rusu's approach to help
improve our result here.

\small
\bibliographystyle{abbrvurl}
\bibliography{more_trees}

\begin{thebibliography}{10}

\bibitem{AgHaVa}
P.~K. Agarwal, S.~Har{-}Peled, and K.~R. Varadarajan.
\newblock Approximating extent measures of points.
\newblock {\em J. {ACM}}, 51(4):606--635, 2004.
\newblock \href {http://dx.doi.org/10.1145/1008731.1008736}
  {\path{doi:10.1145/1008731.1008736}}.

\bibitem{Bac}
C.~Bachmaier, F.~Brandenburg, W.~Brunner, A.~Hofmeier, M.~Matzeder, and
  T.~Unfried.
\newblock Tree drawings on the hexagonal grid.
\newblock In {\em Proc. 16th International Symposium on Graph Drawing (GD)},
  pages 372--383, 2008.
\newblock \href {http://dx.doi.org/10.1007/978-3-642-00219-9_36}
  {\path{doi:10.1007/978-3-642-00219-9_36}}.

\bibitem{BarHar}
G.~Barequet and S.~Har{-}Peled.
\newblock Efficiently approximating the minimum-volume bounding box of a point
  set in three dimensions.
\newblock {\em J. Algorithms}, 38(1):91--109, 2001.
\newblock \href {http://dx.doi.org/10.1006/jagm.2000.1127}
  {\path{doi:10.1006/jagm.2000.1127}}.

\bibitem{BieJGAA17}
T.~Biedl.
\newblock Ideal drawings of rooted trees with approximately optimal width.
\newblock {\em J. Graph Algorithms Appl.}, 21:631--648, 2017.
\newblock \href {http://dx.doi.org/10.7155/jgaa.00432}
  {\path{doi:10.7155/jgaa.00432}}.

\bibitem{BieCCCG17}
T.~Biedl.
\newblock Upward order-preserving 8-grid-drawings of binary trees.
\newblock In {\em Proc. 29th Canadian Conference on Computational Geometry
  (CCCG)}, pages 232--237, 2017.
\newblock \url{http://2017.cccg.ca/proceedings/Session6B-paper4.pdf}.

\bibitem{SODA99}
T.~M. Chan.
\newblock A near-linear area bound for drawing binary trees.
\newblock {\em Algorithmica}, 34(1):1--13, 2002.
\newblock \href {http://dx.doi.org/10.1007/s00453-002-0937-x}
  {\path{doi:10.1007/s00453-002-0937-x}}.

\bibitem{GD96}
T.~M. Chan, M.~T. Goodrich, S.~R. Kosaraju, and R.~Tamassia.
\newblock Optimizing area and aspect ration in straight-line orthogonal tree
  drawings.
\newblock {\em Comput. Geom.}, 23(2):153--162, 2002.
\newblock \href {http://dx.doi.org/10.1016/S0925-7721(01)00066-9}
  {\path{doi:10.1016/S0925-7721(01)00066-9}}.

\bibitem{Cre}
P.~Crescenzi, G.~{Di Battista}, and A.~Piperno.
\newblock A note on optimal area algorithms for upward drawings of binary
  trees.
\newblock {\em Comput. Geom.}, 2:187--200, 1992.
\newblock \href {http://dx.doi.org/10.1016/0925-7721(92)90021-J}
  {\path{doi:10.1016/0925-7721(92)90021-J}}.

\bibitem{CrePen}
P.~Crescenzi and P.~Penna.
\newblock Strictly-upward drawings of ordered search trees.
\newblock {\em Theor. Comput. Sci.}, 203(1):51--67, 1998.
\newblock \href {http://dx.doi.org/10.1016/S0304-3975(97)00287-9}
  {\path{doi:10.1016/S0304-3975(97)00287-9}}.

\bibitem{FrPaPo}
H.~de~Fraysseix, J.~Pach, and R.~Pollack.
\newblock How to draw a planar graph on a grid.
\newblock {\em Combinatorica}, 10(1):41--51, 1990.
\newblock \href {http://dx.doi.org/10.1007/BF02122694}
  {\path{doi:10.1007/BF02122694}}.

\bibitem{DiBOOK}
G.~{Di Battista}, P.~Eades, R.~Tamassia, and I.~G. Tollis.
\newblock {\em Graph Drawing}.
\newblock Prentice Hall, 1999.

\bibitem{DiFraSURV}
G.~{Di Battista} and F.~Frati.
\newblock A survey on small-area planar graph drawing.
\newblock {\em CoRR}, abs/1410.1006, 2014.
\newblock \href {http://arxiv.org/abs/1410.1006} {\path{arXiv:1410.1006}}.

\bibitem{Fra07}
F.~Frati.
\newblock Straight-line orthogonal drawings of binary and ternary trees.
\newblock In {\em Proc. 15th International Symposium on Graph Drawing (GD)},
  pages 76--87, 2007.
\newblock \href {http://dx.doi.org/10.1007/978-3-540-77537-9_11}
  {\path{doi:10.1007/978-3-540-77537-9_11}}.

\bibitem{FraSODA17}
F.~Frati, M.~Patrignani, and V.~Roselli.
\newblock {LR}-drawings of ordered rooted binary trees and near-linear area
  drawings of outerplanar graphs.
\newblock In {\em Proc. 28th {ACM-SIAM} Symposium on Discrete Algorithms
  (SODA)}, pages 1980--1999, 2017.
\newblock \href {http://dx.doi.org/10.1137/1.9781611974782.129}
  {\path{doi:10.1137/1.9781611974782.129}}.

\bibitem{GarSoCG93}
A.~Garg, M.~T. Goodrich, and R.~Tamassia.
\newblock Planar upward tree drawings with optimal area.
\newblock {\em Int. J. Comput. Geometry Appl.}, 6(3):333--356, 1996.
\newblock Preliminary version in SoCG'93.
\newblock \href {http://dx.doi.org/10.1142/S0218195996000228}
  {\path{doi:10.1142/S0218195996000228}}.

\bibitem{GarRus03}
A.~Garg and A.~Rusu.
\newblock Area-efficient order-preserving planar straight-line drawings of
  ordered trees.
\newblock {\em Int. J. Comput. Geometry Appl.}, 13(6):487--505, 2003.
\newblock \href {http://dx.doi.org/10.1142/S021819590300130X}
  {\path{doi:10.1142/S021819590300130X}}.

\bibitem{GarRusICCSA}
A.~Garg and A.~Rusu.
\newblock Straight-line drawings of general trees with linear area and
  arbitrary aspect ratio.
\newblock In {\em Proc. 3rd International Conference on Computational Science
  and Its Applications (ICCSA), Part {III}}, pages 876--885, 2003.
\newblock \href {http://dx.doi.org/10.1007/3-540-44842-X_89}
  {\path{doi:10.1007/3-540-44842-X_89}}.

\bibitem{GarRus04}
A.~Garg and A.~Rusu.
\newblock Straight-line drawings of binary trees with linear area and arbitrary
  aspect ratio.
\newblock {\em J. Graph Algorithms Appl.}, 8(2):135--160, 2004.
\newblock \url{http://jgaa.info/accepted/2004/GargRusu2004.8.2.pdf}.

\bibitem{LeeTHESIS}
S.~Lee.
\newblock Upward octagonal drawings of ternary trees.
\newblock Master's thesis, University of Waterloo, 2016.
\newblock (Supervised by T. Biedl and T. M. Chan.)
  \url{https://uwspace.uwaterloo.ca/handle/10012/10832}.

\bibitem{Lei}
C.~E. Leiserson.
\newblock Area-efficient graph layouts (for {VLSI)}.
\newblock In {\em Proc. 21st IEEE Symposium on Foundations of Computer Science
  (FOCS)}, pages 270--281, 1980.
\newblock \href {http://dx.doi.org/10.1109/SFCS.1980.13}
  {\path{doi:10.1109/SFCS.1980.13}}.

\bibitem{Mat}
J.~Matou{\v{s}}ek.
\newblock Range searching with efficient hiearchical cutting.
\newblock {\em Discrete {\&} Computational Geometry}, 10:157--182, 1993.
\newblock \href {http://dx.doi.org/10.1007/BF02573972}
  {\path{doi:10.1007/BF02573972}}.

\bibitem{ReiTil}
E.~M. Reingold and J.~S. Tilford.
\newblock Tidier drawings of trees.
\newblock {\em {IEEE} Trans. Software Eng.}, 7(2):223--228, 1981.
\newblock \href {http://dx.doi.org/10.1109/TSE.1981.234519}
  {\path{doi:10.1109/TSE.1981.234519}}.

\bibitem{Sch}
W.~Schnyder.
\newblock Embedding planar graphs on the grid.
\newblock In {\em Proc. 1st {ACM-SIAM} Symposium on Discrete Algorithms
  (SODA)}, pages 138--148, 1990.
\newblock \url{http://dl.acm.org/citation.cfm?id=320176.320191}.

\bibitem{ShiTHESIS}
Y.~Shiloach.
\newblock {\em Linear and Planar Arrangement of Graphs}.
\newblock PhD thesis, Weizmann Institute of Science, 1976.
\newblock
  \url{https://lib-phds1.weizmann.ac.il/Dissertations/shiloach_yossi.pdf}.

\bibitem{ShKiCh}
C.~Shin, S.~K. Kim, and K.~Chwa.
\newblock Area-efficient algorithms for straight-line tree drawings.
\newblock {\em Comput. Geom.}, 15(4):175--202, 2000.
\newblock Preliminary version in COCOON'96.
\newblock \href {http://dx.doi.org/10.1016/S0925-7721(99)00053-X}
  {\path{doi:10.1016/S0925-7721(99)00053-X}}.

\bibitem{ShiIPL}
C.~Shin, S.~K. Kim, S.~Kim, and K.~Chwa.
\newblock Algorithms for drawing binary trees in the plane.
\newblock {\em Inf. Process. Lett.}, 66(3):133--139, 1998.
\newblock \href {http://dx.doi.org/10.1016/S0020-0190(98)00049-0}
  {\path{doi:10.1016/S0020-0190(98)00049-0}}.

\bibitem{Tre}
L.~Trevisan.
\newblock A note on minimum-area upward drawing of complete and fibonacci
  trees.
\newblock {\em Inf. Process. Lett.}, 57(5):231--236, 1996.
\newblock \href {http://dx.doi.org/10.1016/0020-0190(96)81422-0}
  {\path{doi:10.1016/0020-0190(96)81422-0}}.

\bibitem{Val}
L.~G. Valiant.
\newblock Universality considerations in {VLSI} circuits.
\newblock {\em {IEEE} Trans. Computers}, 30(2):135--140, 1981.
\newblock \href {http://dx.doi.org/10.1109/TC.1981.6312176}
  {\path{doi:10.1109/TC.1981.6312176}}.

\bibitem{VazBOOK}
V.~V. Vazirani.
\newblock {\em Approximation Algorithms}.
\newblock Springer, 2001.
\newblock \url{https://www.cc.gatech.edu/fac/Vijay.Vazirani/book.pdf}.

\end{thebibliography}
}

\end{document}